\title{On Maximum-Sum Matchings of Points}
\titlerunning{On Maximum-Sum Matchings of Points}
\author{Sergey Bereg}{Department of Computer Science, University of Texas at Dallas, USA}{besp@utdallas.edu}{}{Supported in part by NSF award CCF-1718994}
\author{Oscar Chac\'on-Rivera}{Pontificia Universidad Cat\'olica de Chile, Chile}{opchacon@uc.cl}{}{}
\author{David Flores-Pe\~naloza}{Departamento de Matem\'aticas, Facultad de Ciencias, Universidad Nacional Aut\'onoma de M\'exico, D.F. M\'exico, Mexico}{dflorespenaloza@gmail.com}{}{Partially supported by PAPIIT IN117317 (UNAM, Mexico)}
\author{Clemens Huemer}{Departament de Matem\`{a}tiques, Universitat Polit\`{e}cnica de Catalunya, Spain}{clemens.huemer@upc.edu}{}{Partially supported by projects MTM2015-63791-R (MINECO/FEDER) and Gen.\ Cat.\ DGR 2017SGR1336}
\author{Pablo P\'erez-Lantero}{Departamento de Matem\'atica y Ciencia de la Computaci\'on, Universidad de Santiago, Chile}{pablo.perez.l@usach.cl}{}{Partially supported by projects DICYT 041933PL Vicerrector\'ia de Investigaci\'on, Desarrollo e Innovaci\'on USACH (Chile), and Programa Regional STICAMSUD 19-STIC-02}
\author{Carlos Seara}{Departament de Matem\`{a}tiques, Universitat Polit\`{e}cnica de Catalunya, Spain}{carlos.seara@upc.edu}{}{Supported by projects MTM2015-63791-R MINECO/FEDER and Gen.~Cat.\ DGR 2017SGR1640}
\authorrunning{S.\ Bereg et al.}
\keywords{Max-sum Euclidean matching, Disks, Intersection graph, Colored points}
\newcommand{\M}{\ensuremath{\mathcal{M}}}
\newcommand{\E}{\ensuremath{\mathcal{E}}}
\newcommand{\cost}{\ensuremath{\operatorname{cost}}}
\newcommand\blfootnote[1]{%
  \begingroup
  \renewcommand\thefootnote{}\footnote{#1}%
  \addtocounter{footnote}{-1}%
  \endgroup}
\begin{document}
\maketitle

%

\begin{abstract}
Huemer et al.\ (Discrete Mathematics, 2019) proved that for any two point sets $R$ and $B$ with $|R|=|B|$, the perfect matching that matches points of $R$ with points of $B$, and maximizes the total \emph{squared} Euclidean distance of the matched pairs, verifies that all the disks induced by the matching have a common point. Each pair of matched points $p\in R$ and $q\in B$ induces the disk of smallest diameter that covers $p$ and $q$. Following this research line, in this paper we consider the perfect matching that maximizes the total Euclidean distance. First, we prove that this new matching for $R$ and $B$ does not always ensure the common intersection property of the disks. Second, we extend the study of this new matching for sets of $2n$ uncolored points in the plane, where a matching is just a partition of the points into $n$ pairs. As the main result, we prove that in this case all disks of the matching do have a common point. This implies a big improvement on a conjecture of Andy Fingerhut in 1995, about a maximum matching of $2n$ points in the plane.
\end{abstract}

\section{Introduction}


Let $R$ and $B$ be two disjoint point sets in the plane with $|R|=|B|=n$, $n\ge 2$. The points in $R$ are \emph{red}, and those in $B$ are \emph{blue}. A \emph{matching} of $R\cup B$ is a partition of $R\cup B$ into $n$ pairs such that each pair consists of a red and a blue point. A point $p\in R$ and a point $q\in B$ are \emph{matched} if and only if the (unordered) pair $(p,q)$ is in the matching. For every $p,q\in\mathbb{R}^2$, we use $pq$ to denote the segment connecting $p$ and $q$, and $\|p-q\|$ to denote its length, which is the Euclidean norm of the vector $p-q$. Let $D_{pq}$ denote the disk with diameter equal to $\|p-q\|$, that is centered at the midpoint $\frac{p+q}{2}$ of the segment $pq$. For any matching $\M$, we use $D_\M$ to denote the set of the disks associated with the matching, that is, $D_\M=\{D_{pq} \mid (p,q) \in\M\}$.


Huemer et al.~\cite{HUEMER2019} proved that if $\M$ is any matching that maximizes the total \emph{squared} Euclidean distance of the matched points, i.e., it maximizes $\sum_{(p,q)\in\M}\|p-q\|^2$, then all disks of $D_\M$ have a point in common.

In this paper, we will consider the \emph{max-sum} matching $\E$, as the matching that maximizes the total Euclidean distance of the matched points. For any matching $\M$, let $\cost(\M)$ denote the sum $\sum_{(p,q)\in\M}\|p-q\|$. Thus, $\E$ is such that $\cost(\E)$ is maximum among all matchings. In Section~\ref{sec2}, we prove that every pair of disks in $\E$ have a common point, but it cannot be guaranteed that \emph{all disks} have a common point. 

In this paper, we will also consider max-sum matchings of sets of $2n$ uncolored points in the plane, where a matching is just a partition of the points into $n$ pairs. Then, we study the following problem: Fingerhut~\cite{andy} conjectured that given a set $P$ of $2n$ uncolored points in the plane and a max-sum matching $\{(a_i,b_i),i=1,\dots,n\}$ of $P$, there exists a point $o$ of the plane, not necessarily a point of $P$, such that
\begin{equation}\label{eqFingerhut}
    \|a_i-o\|+\|b_i-o\|~\le~ \frac{2}{\sqrt{3}}\cdot \|a_i-b_i\| \hspace{0.3cm} \textrm{for all } i\in\{1,\ldots,n\}, ~\textrm{where } 2/\sqrt{3}\approx 1.1547.
\end{equation}
The statement of equation~\eqref{eqFingerhut} is equivalent to stating that the intersection $E_1\cap E_2\cap\dots\cap E_n$ is not empty, where $E_i$ is the region bounded by the ellipse with foci $a_i$ and $b_i$, and semimajor axis length $(1/\sqrt{3})\cdot \|a_i-b_i\|$, for all $i\in\{1,\ldots,n\}$~\cite{andy}. Then, by Helly's Theorem, it is sufficient to prove equation~\eqref{eqFingerhut} for $n=3$. As noted by Andy Fingerhut, the factor $2/\sqrt{3}$ is the minimum possible. It is enough to consider an equilateral triangle, where at each vertex two points are located. The max-sum matching of the six points is made of pairs of vertex-opposed points, and the regions defined by the three ellipses using the factor $2/\sqrt{3}$ have exactly one point in common. 

Andy Fingerhut was also interested in a small constant close to $2/\sqrt{3}$, and Eppstein~\cite{andy} proved that the result holds with $2.5$ instead of $2/\sqrt 3$. The proof is simple and does not require Helly's theorem. Let $o$ be the midpoint of the shortest edge in the matching. Namely, for all $i\in\{1,\ldots,n\}$, 
\begin{equation}\label{eqEppstein}
    \|a_i-o\|+\|b_i-o\|~\le~ 2.5\cdot \|a_i-b_i\|.
\end{equation}

In Section~\ref{sec3}, we first show that the constant in this inequality can be improved to $\sqrt 5\approx 2.236$ by using the same point $o$, but refined arguments. Second, as the main result of this paper, we improve even further this constant. Precisely, we prove that for any point set $P$ of $2n$ uncolored points in the plane and a max-sum matching $\E=\{(a_i,b_i),i=1,\dots,n\}$ of $P$, all disks in $D_{\E}$ have a common intersection. This directly implies that any point $o$ in the common intersection satisfies
\[
	\|a_i-o\|+\|b_i-o\|~\le~ \sqrt{2}\cdot  \|a_i-b_i\|
\]
for all $i\in\{1,\ldots,n\}$, where $\sqrt{2}\approx 1.4142$. We will use Helly's theorem, that is, we will prove the claim for $n=3$.

{\bf Remark}. 
Recently, Adiprasito \textit{et al}.~\cite{ABMT} proved a no-dimension version of Tverberg's theorem:
For any set $P$ of $n$ points in $\mathbb{R}^d$ and $k\in\{2,\ldots,n\}$, there exists a point $o \in \mathbb{R}^d$ and a partition of $P$ into $k$ sets $P_1,\ldots, P_k$ such that, for all $i\in\{1,\ldots,k\}$,
$
d(o,conv(P_i)) \leq (2+\sqrt{2})\sqrt{\frac{k}{n}}~diam(P),
$
where $conv(P_i)$ is the convex hull of $P_i$, $d(o,conv(P_i))$ is the Euclidean distance from $o$ to $conv(P_i)$, and $diam(P)$ is the diameter of $P$. Our result can improve the bound to $d(o,conv(P_i)) \leq \frac{1}{2}~diam(P_i) \leq \frac{1}{2}~diam(P)$ for $d=2$ and $n=2k$ (using the max-sum matching for $P$ and taking $o$ in the intersection of the disks).  For $d=2$ and arbitrary values of $k$, another bound of $d(o,conv(P_i)) \leq \frac{1}{\sqrt{3}}~diam(P) $ is implied by Jung's theorem~\cite{jung-1910} (taking as $o$ the center of the smallest disk enclosing $P$).

\subsection{Related problems}\label{sec1.1}

As described by Huemer et al.~\cite{HUEMER2019}, this class of problems is well studied in discrete and computational geometry, starting from the classic result that $n$ red points and $n$ blue points can always be perfectly matched with $n$ pairwise non-crossing segments, where each segment connects a red point with a blue point~\cite{larson1983problem}. The study has been continued in plenty of directions, for both the monochromatic and bichromatic versions, by using pairwise disjoint objects inducing the matching: segments~\cite{aloupis2015,dumitrescu2001}, rectangles~\cite{abrego2004matching,abrego2009,bereg2009,caraballo2014matching,Corujo2019}, and more general geometric objects~\cite{AloupisCCDDDMHHLSST13}. 

Our results, as that of~\cite{HUEMER2019}, are in the direction opposite to that of above mentioned known results on matching points: The goal is that all matching objects (i.e., the disks in $D_{\M}$) have a common intersection, whereas in previous work it is required that all matching objects are pairwise disjoint.

The problem of stabbing a finite set of pairwise intersecting disks with four or five points~\cite{carmi2019,danzer1986losung,har2019stabbing}, is also related with this paper, since we deal with finite sets of pairwise intersecting disks.

Particularly, the conjecture posted by Fingerhut~\cite{andy} relates to an abstract version of a problem in designing communication networks. The given $2n$ points represent nodes that should be connected by a network. The cost $\cost(\E)$ of a max-sum matching $\E$ is a lower bound on the cost of any feasible network, and the point $o$ is a place where one would like to place the center of a ``star'' network, in which all given points are connected directly to the center of the star.

\section{Red-blue matchings}\label{sec2}

Let $R$ and $B$ be two disjoint point sets defined as above, where $|R|=|B|=n$, $n\ge 2$. In this section, we prove that the perfect matching $\E$ of $R$ and $B$ that maximizes the total Euclidean distance $\operatorname{cost}(\E)$ does not always ensure the common intersection property of the disks in $D_\E$. Nevertheless, although the common intersection is not always possible, all disks must be pairwise intersecting, as proved in the next proposition.

\begin{proposition}\label{prop1}
	Every pair of disks in $D_\E$ have a non-empty intersection.
\end{proposition}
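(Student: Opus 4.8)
The plan is to exploit the optimality of $\E$ through a single local exchange between two arbitrary matched pairs, and then show that the inequality this produces is exactly the condition certifying that the two corresponding disks meet.

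First I would fix two matched pairs $(p_1,q_1),(p_2,q_2)\in\E$, where $p_1,p_2\in R$ and $q_1,q_2\in B$. Replacing these two pairs by the pairs $(p_1,q_2),(p_2,q_1)$ gives another valid red-blue matching, so the maximality of $\cost(\E)$ immediately yields the exchange inequality
\[
\|p_1-q_1\|+\|p_2-q_2\|\ \ge\ \|p_1-q_2\|+\|p_2-q_1\|.
\]
Next I would recast the intersection condition for the disks. Since $D_{p_iq_i}$ has center $\frac{p_i+q_i}{2}$ and radius $\frac{\|p_i-q_i\|}{2}$, the disks $D_{p_1q_1}$ and $D_{p_2q_2}$ intersect if and only if the distance between their centers does not exceed the sum of their radii, that is,
\[
\Big\|\tfrac{p_1+q_1}{2}-\tfrac{p_2+q_2}{2}\Big\|\ \le\ \tfrac{1}{2}\big(\|p_1-q_1\|+\|p_2-q_2\|\big),
\]
which after clearing the factor $\tfrac12$ is equivalent to $\|(p_1+q_1)-(p_2+q_2)\|\le \|p_1-q_1\|+\|p_2-q_2\|$.

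The crux is the algebraic identity $(p_1+q_1)-(p_2+q_2)=(p_1-q_2)-(p_2-q_1)$, which lets the triangle inequality give $\|(p_1+q_1)-(p_2+q_2)\|\le \|p_1-q_2\|+\|p_2-q_1\|$. Chaining this with the exchange inequality above produces $\|(p_1+q_1)-(p_2+q_2)\|\le \|p_1-q_1\|+\|p_2-q_2\|$, which is precisely the desired intersection condition. As the two pairs were arbitrary, every pair of disks in $D_\E$ intersects.

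I would emphasize that there is no real obstacle here: the only non-routine step is spotting the right way to split the center-difference vector so that the triangle inequality yields exactly the two ``cross'' distances $\|p_1-q_2\|$ and $\|p_2-q_1\|$ that are controlled by optimality. Everything else is the standard exchange argument combined with the center-distance characterization of when two disks intersect. (Note also that this argument only uses the optimality of $\E$ restricted to two pairs, so it is genuinely local and does not need any global structure of the matching — which is consistent with the later fact that pairwise intersection does \emph{not} upgrade to a common intersection in the red-blue setting.)
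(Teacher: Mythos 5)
Your proposal is correct and follows essentially the same argument as the paper: the exchange inequality from optimality, the center-distance criterion for two disks to intersect, and the decomposition of the center-difference vector into the two ``cross'' vectors followed by the triangle inequality. The only cosmetic difference is that you write the decomposition as $(p_1-q_2)-(p_2-q_1)$ while the paper writes it as a sum of the two cross vectors, which is the same identity.
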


\begin{proof}
Let $(a,a')$ and $(b,b')$ be two different pairs of $\E$, with $a,b\in R$ and $a',b'\in B$. Since $\E$ is a maximum matching, we must have $\|a-b'\|+\|a'-b\|\le \|a-a'\| + \|b-b'\|$. The equality holds, for instance, when $a'$, $a$, $b$, and $b'$ are in this order consecutive vertices of a square. Note that, in the contrary case, we will have $\cost((\E\setminus\{(a,a'),(b,b')\})\cup\{(a,b'),(b,a')\})>\cost(\E)$.

The disks $D_{aa'},D_{bb'}\in D_\E$ have a common point if and only if the distance $\left\|\frac{a+a'}{2}-\frac{b+b'}{2}\right\|$ between the centers $\frac{a+a'}{2}$ and $\frac{b+b'}{2}$ of $D_{aa'}$ and $D_{bb'}$, respectively, is at most the sum $\frac{\|a-a'\|}{2} + \frac{\|b-b'\|}{2}$ of their radii. The following equations ensure this condition:
\begin{eqnarray*}
	\left\|(a+a')-(b+b')\right\|
		& = & \left\|(a-b')+(a'-b)\right\| \\
		& \le & \|a-b'\|+\|a'-b\| \\
		& \le & \|a-a'\| + \|b-b'\|.
\end{eqnarray*}
Hence, we conclude that $D_{aa'}\cap D_{bb'}\neq\emptyset$ for every pair of disks $D_{aa'}$ and $D_{bb'}$ of $D_\E$.
\end{proof}

\begin{theorem}\label{theo:not3}
There exist point sets $R\cup B$, with $R\cap B=\emptyset$ and $|R|=|B|=3$, such that, for any max-sum matching $\E$ of $R$ and $B$, the intersection of the disks of $D_{\E}$ is the empty set.
\end{theorem}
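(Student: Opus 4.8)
The plan is to prove the statement by an \emph{explicit construction}: I will exhibit six points, three red and three blue, pin down the max-sum matching exactly, and then check that its three induced disks have empty common intersection while---consistently with Proposition~\ref{prop1}---still pairwise intersecting. Since the claim is an existence statement, there is no general argument to make; the content is entirely in choosing good coordinates and verifying two properties about them.

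First I would fix a candidate matching $\{(r_1,b_1),(r_2,b_2),(r_3,b_3)\}$ whose three disks $D_{r_ib_i}$ are arranged around a small central ``hole'': pairwise overlapping, but with no point common to all three (for instance three almost-congruent disks whose centers form a nearly equilateral triangle whose circumradius slightly exceeds the common radius). This already secures the desired conclusion \emph{for this particular matching}. The substantive work is to position and, crucially, to $2$-color the six endpoints so that this hole-forming matching is the genuine max-sum matching $\E$ of $R\cup B$.

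The optimality check is finite: there are at most $3!=6$ bipartite matchings, so I would evaluate $\cost(\cdot)$ on each and confirm that the candidate strictly dominates the other five. I would split this into the three ``transposition'' swaps, governed by inequalities $\|r_i-b_i\|+\|r_j-b_j\|\ge \|r_i-b_j\|+\|r_j-b_i\|$ exactly as in the proof of Proposition~\ref{prop1}, and the two remaining $3$-cycles, which must be verified separately since the swap inequalities only yield that the candidate beats the \emph{average} of the two $3$-cycles, not each of them. If convenient I would arrange $\E$ to be unique, so that ``for any max-sum matching'' reduces to a single case; otherwise I would argue that all maximizers are of the same type and inherit the empty-intersection property. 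With the optimal matching fixed, I would then certify empty common intersection directly---e.g.\ by exhibiting the lens $D_{r_1b_1}\cap D_{r_2b_2}$ and checking it lies entirely outside $D_{r_3b_3}$---and record pairwise nonemptiness as a consistency check against Proposition~\ref{prop1}.

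The main obstacle is the tension between the two requirements. Maximizing the total length pushes the matched segments to be long and to pass near the center of the configuration, which tends to \emph{force} a common intersection of the disks; indeed, symmetric configurations resist the construction, since for concentrically placed, $3$-fold symmetric color classes the length-maximizing matching is precisely the one whose disk centers are pulled closest to the center, guaranteeing a common point. Conversely, spreading the disk centers enough to open a central hole inevitably creates \emph{cross} pairs (an endpoint of one matched segment together with an endpoint of another) that are longer than the matched segments themselves and threaten to beat the candidate. The device that resolves this is exactly the bichromatic restriction: by coloring the endpoints so that the dangerous long cross pairs become monochromatic---hence forbidden in a red--blue matching---one suppresses the competing matchings while keeping the hole intact. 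Engineering a single $3$--$3$ coloring that neutralizes all such long cross pairs at once (there is essentially one unavoidable long pair per pair of disks) is the delicate point, and it is what dictates the precise coordinates and the necessary asymmetry of the example.
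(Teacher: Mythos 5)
Your proposal is a correct \emph{roadmap} for the proof, and it matches the paper's strategy at the level of structure: exhibit an explicit six-point configuration, verify by a finite check over the $3!=6$ bipartite matchings that a designated matching is the (unique) maximizer, and then certify that its three disks have empty common intersection. You also correctly flag the two genuine subtleties: that the pairwise swap inequalities from Proposition~\ref{prop1} only control the average of the two $3$-cycles, so those must be bounded individually; and that the real difficulty is reconciling ``max-sum'' (which pulls the disks toward a common point) with ``empty intersection.'' However, the proposal contains no actual construction. For an existence theorem, the coordinates \emph{are} the proof; you acknowledge this yourself (``the content is entirely in choosing good coordinates'') and then defer precisely that step, ending with the admission that engineering the coloring ``is the delicate point'' that ``dictates the precise coordinates.'' As written, nothing rules out the possibility that the tension you describe cannot be resolved for $n=3$ at all, so the statement remains unproved.

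For comparison, the paper resolves the tension not with three near-congruent disks around a central hole, but with a degenerate-looking configuration: $a=(-1,0)$, $b=(1,0)$, $c=(0,\sqrt3)$, $c'=(0,3)$, and $a'\in bc$, $b'\in ac$ both at distance $\varepsilon$ from $c$. The matching $\{(a,a'),(b,b'),(c,c')\}$ then consists of two long segments running almost along the triangle's sides plus one short vertical segment; optimality follows for $\varepsilon<(5-\sqrt{10}-\sqrt3)/4$ by comparing against the four competing matchings (two of which coincide in cost by the mirror symmetry), and emptiness of $D_{aa'}\cap D_{bb'}\cap D_{cc'}$ is certified by a separating-line argument: $D_{aa'}\cap D_{cc'}$ lies in $x<0$ and $D_{bb'}\cap D_{cc'}$ lies in $x>0$. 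To complete your proof you would need to either supply coordinates realizing your ``three disks around a hole'' picture together with the full optimality verification, or adopt a concrete configuration of this kind.
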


\begin{proof}
Let $R=\{a,b,c\}$ and $B=\{a',b',c'\}$, with $a=(-1,0)$, $b=(1,0)$, $c=(0,\sqrt{3})$, $c'=(0,3)$, and $a'\in bc$ and $b'\in ac$ such that $\|c-a'\|=\|c-b'\|=\varepsilon$, for a parameter $\varepsilon>0$ that ensures that $\E=\{(a,a'),(b,b'),(c,c')\}$ is the only maximum matching of $R\cup B$ (see Figure~\ref{fig:TheoNot3}).

\begin{figure}[t]
	\centering
	\includegraphics[scale=1.0,page=1]{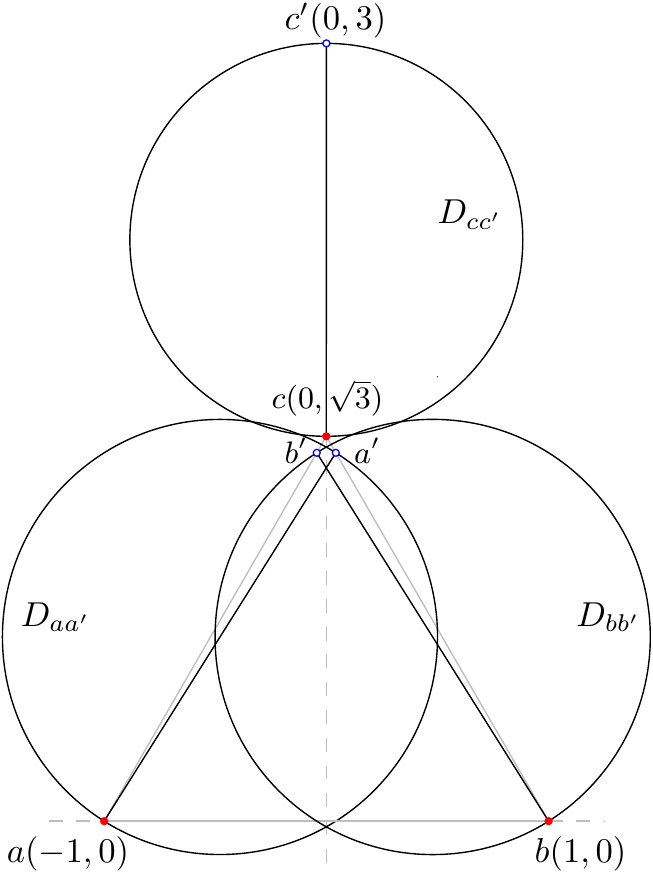}
	\caption{\small{Proof of Theorem~\ref{theo:not3}.}}
	\label{fig:TheoNot3}
\end{figure}

Note that
\begin{eqnarray*}
	\|a-b'\|+\|b-c'\|+\|c-a'\| 
		& = & \|a-c'\|+\|b-a'\|+\|c-b'\| \\
		& = & \sqrt{10} + (2-\varepsilon) + \varepsilon \\
		& = & 2+\sqrt{10},
\end{eqnarray*}
and we need to ensure that
\[
	2+\sqrt{10} ~<~ \|a-a'\|+\|b-b'\|+\|c-c'\| ~=~ \cost(\E).
\]
That is, the matching $\{(a,a'),(b,b'),(c,c')\}$ has larger total Euclidean distance than the matchings $\{(a,b'),(b,c'),(c,a')\}$ and $\{(a,c'),(b,a'),(c,b')\}$. Since
\begin{eqnarray*}
	\|a-a'\|+\|b-b'\|+\|c-c'\| 
		& = & 2\|a-a'\|+\|c-c'\| \\
		& = & 2\|a-a'\|+(3-\sqrt{3}) \\
		& > & 2(\|a-c\|-\varepsilon)+ (3-\sqrt{3}) \\
		& = & 7 - \sqrt{3} - 2\varepsilon,
\end{eqnarray*}
it suffices to ensure
\begin{equation}\label{eq1}
	2+\sqrt{10} ~<~ 7 - \sqrt{3} - 2\varepsilon ~~\Longleftrightarrow~~ \varepsilon ~<~ \frac{5-\sqrt{10}-\sqrt{3}}{2} ~\approx~ 0.0528.
\end{equation}
Furthermore, since
\begin{eqnarray*}
	\|a-c'\|+\|b-b'\|+\|c-a'\| 
	& = & \|a-a'\|+\|b-c'\|+\|c-b'\| \\
	& < & (2+\varepsilon) + \sqrt{10} + \varepsilon \\
	& = & 2+\sqrt{10}+2\varepsilon,
\end{eqnarray*}
to ensure that $\{(a,a'),(b,b'),(c,c')\}$ has larger total Euclidean distance than $\{(a,c'),(b,b'),(c,a')\}$ and $\{(a,a'),(b,c'),(c,b')\}$, it suffices to guarantee that
\begin{equation}\label{eq2}
	2+\sqrt{10} +2\varepsilon ~<~ 7 - \sqrt{3} - 2\varepsilon ~~\Longleftrightarrow~~ \varepsilon ~<~ \frac{5-\sqrt{10}-\sqrt{3}}{4} ~\approx~ 0.0264.
\end{equation}
Hence, any $\varepsilon>0$ satisfying~\eqref{eq2} (and then also~\eqref{eq1}) is such that $\E=\{(a,a'),(b,b'),(c,c')\}$ is the only maximum matching of $R\cup B$. It remains to show that $D_{aa'}\cap D_{bb'}\cap D_{cc'}=\emptyset$. To see that, it is straightforward to show (based on the fact that $c\notin D_{aa'}$ and $c\notin D_{bb'}$) that all points of $D_{aa'}\cap D_{cc'}$ have negative $x$-coordinates, and all points of $D_{bb'}\cap D_{cc'}$ have positive $x$-coordinates.
\end{proof}

Let $|R|=|B|=n$. If $n=2$, then the disks of $D_{\E}$ always intersect by Proposition~\ref{prop1}. If $n=3$, then the disks of $D_{\E}$ may not intersect by Theorem~\ref{theo:not3}. One could expect that the disks of $D_{\E}$ always intersect if $n$ is sufficiently large. Next, we answer this question negatively.

\begin{theorem}\label{theo:not4}
For any $n\ge 4$, there exist point sets $R\cup B$, with $R\cap B=\emptyset$ and $|R|=|B|=n$, such that, for any max-sum matching $\E$ of $R$ and $B$, the intersection of the disks of $D_{\E}$ is the empty set.
\end{theorem}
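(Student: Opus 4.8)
The plan is to build on the three-point gadget of Theorem~\ref{theo:not3}: since adjoining disks to a family can only shrink its common intersection, it suffices to exhibit, inside a larger configuration, three disks of the forced matching that already fail to have a common point. Concretely, I would fix the six gadget points $a,b,c,a',b',c'$ of Theorem~\ref{theo:not3} near the origin and adjoin $n-3$ extra red--blue pairs placed at a very large radius $\rho$ in distinct directions, each extra pair forming a diameter of a huge circle centred at the gadget (extra red $i$ at $\rho\hat u_i$, extra blue $i$ at $-\rho\hat u_i$). The intended matching $\E_0$ pairs each extra red with its antipodal extra blue and matches the gadget internally as $\{(a,a'),(b,b'),(c,c')\}$. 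If $\E_0$ is the unique max-sum matching, then $D_{aa'}\cap D_{bb'}\cap D_{cc'}=\emptyset$ by Theorem~\ref{theo:not3}, and since $D_{\E}\supseteq\{D_{aa'},D_{bb'},D_{cc'}\}$ the whole intersection is empty.

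The first steps establish the coarse structure of any max-sum matching as $\rho\to\infty$. On a circle the antipodal pairing uniquely maximises total length, so I would first show that the extra points must be matched antipodally: each extra red contributes a leading term of $2\rho$ when matched to its antipode and strictly less when matched to any other extra point, and the potential $\phi(p)=\rho$ on the extra points is feasible, giving $\|p-q\|\le 2\rho=\phi(p)+\phi(q)$ on every edge among them. The remaining, decisive question is whether the gadget is matched \emph{internally}, i.e.\ whether the matching has no ``cross'' edge joining a gadget point to an extra point. A single such swap is easy to rule out: replacing an antipodal edge and a gadget edge $g_rg_b$ by two cross edges changes the length by $\langle \hat r,\, g_r-g_b\rangle-\|g_r-g_b\|+O(1/\rho)$, which is negative for large $\rho$ by Cauchy--Schwarz whenever no extra direction is parallel to $g_r-g_b$, a condition secured by choosing the extra directions generically. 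Because the gadget sits at the centre, the ``detour through the centre'' costs essentially nothing at leading order, so the gadget's own edge is pure profit.

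The main obstacle is upgrading this single-swap argument to a \emph{global} one, ruling out every competitor, in particular matchings in which the gadget simultaneously ``steals'' several extra points along a long alternating cycle. The difficulty is structural: routing an extra red and an extra blue both into the gadget replaces one diameter of length $\approx 2\rho$ by two radii of length $\approx\rho$, so the leading $\rho$-terms \emph{cancel} and optimality is decided entirely by the $O(1)$ corrections. Worse, the obvious LP-duality certificate is unavailable for a reason that is exactly the property we exploit: a potential tight on $\E_0$ would, for densely chosen extra directions, force the centre of the large circle to lie on each of the three gadget segments $aa',bb',cc'$ simultaneously, which is impossible precisely because those three disks have no common point. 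Hence no ``through-the-centre'' certificate can exist, and the internal matching of the gadget must be justified directly.

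To carry out that comparison I would bound, for every subset $S$ of gadget points that a competitor diverts to the extra circle, the length gained by those diversions against both $\cost(\{(a,a'),(b,b'),(c,c')\})$ and the length lost by the diameters thereby disrupted. Each diverted gadget point $g$ contributes at most $|g|+O(1/\rho)$ beyond the radius $\rho$, so the total gain is at most $\sum_{g\in S}\langle \hat u_g,-g\rangle$ over an optimal assignment of $S$ to extra directions, a quantity controlled by the gadget's diameter and the chosen directions. I expect the construction to go through once the extra directions are fixed so that this worst-case gain is strictly smaller than the gadget matching value for every admissible $S$ (there are only finitely many), combined with the strict optimality of $\{(a,a'),(b,b'),(c,c')\}$ among internal gadget matchings from Theorem~\ref{theo:not3}. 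Verifying this finite system of strict inequalities, rather than any single slick estimate, is where I expect the bulk of the effort to lie.
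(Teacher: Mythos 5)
Your construction is genuinely different from the paper's. The paper does the opposite of what you do: instead of sending the $n-3$ extra pairs to a huge circle at radius $\rho$, it crams all of them into an $\varepsilon$-neighbourhood of the gadget vertex $c$ (the extra blue points on the segment $b'a'$, the extra red points on the horizontal line through $c$). With that choice the case analysis collapses to two crude bounds: any matching in which $c'$ is paired with a point of the cluster has cost at least $7-\sqrt3-2\varepsilon$, while any matching pairing $c'$ with $a$ or $b$ has cost at most $\sqrt{10}+2+\varepsilon+2(n-2)\varepsilon$, because the $n-2$ leftover pairs all live inside the cluster and contribute at most $2\varepsilon$ each. No leading-order cancellation ever occurs, so no delicate $O(1)$ comparison is needed.

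In your version that comparison is unavoidable, and it is exactly the step you leave as ``where I expect the bulk of the effort to lie'' --- so as written there is a genuine gap: the global optimality of $\E_0$ is asserted, not proved. The gap is fillable, and the missing idea is a contraction argument that you almost state but do not isolate. First, colour counting caps the number of diverted extra reds (resp.\ blues) at $3$, and any surviving extra point whose antipode has been stolen must be matched non-antipodally at a cost of $\Theta(\rho)$; hence for large $\rho$ the diverted extra reds and blues come in antipodal pairs indexed by some $I$, $|I|\le 3$. Second, for a pair $i\in I$ routed to gadget blue $g$ and gadget red $g'$, one has $\|\rho\hat u_i-g\|+\|{-\rho\hat u_i}-g'\|-2\rho=\langle\hat u_i,\,g'-g\rangle+O(1/\rho)\le\|g'-g\|+O(1/\rho)$. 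Summing, the detour edges together with the $3-|I|$ internal gadget edges are bounded by the cost of an \emph{induced perfect matching of the six gadget points} plus $O(1/\rho)$, and that induced matching costs at most $\cost(\{(a,a'),(b,b'),(c,c')\})$ by the optimality (indeed uniqueness, for the paper's choice of $\varepsilon$) of the gadget matching. To make the inequality strict by a margin independent of $\rho$, you must choose the directions $\hat u_i$ to avoid the nine difference vectors $g'-g$ between gadget reds and gadget blues; ``generic'' is enough but should be said, since equality in Cauchy--Schwarz is exactly the degenerate case. With that lemma the ``finite system of inequalities'' you worry about reduces to a single statement, uniform in $n$. What your route buys in exchange for this extra work is nothing the theorem needs, so I would either supply the contraction lemma explicitly or switch to the clustering construction.
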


\begin{proof}
We construct a set $R$ of $n$ red points, and a set $B$ of $n$ blue points, as follows. First, take six points $a,b,c\in R$ and $a',b',c'\in B$ as in Theorem~\ref{theo:not3}. Second, we add $n-3$ red points and $n-3$ blue points in the $\varepsilon$-neighborhood of point $c$, as explained below (see Figure~\ref{fig:TheoNot4}, a zoomed-in view of Figure~\ref{fig:TheoNot3}), where $\varepsilon>0$ is a sufficiently small number that will be specified later. 

\begin{figure}[t]
	\centering
	\includegraphics[scale=1.0,page=2]{img.pdf}
	\caption{\small{Proof of Theorem~\ref{theo:not4}.}}	
	\label{fig:TheoNot4}
\end{figure}

Add $n-3$ blue points, denoted $a'_1,a'_2,\dots,a'_{n-3}$, on the segment $b'a'$. Note that they are within distance $\varepsilon$ from $c$ since $\|c-b'\|=\|c-a'\|=\varepsilon$. Add $n-3$ red points, denoted $a_1,a_2,\dots,a_{n-3}$, on the horizontal line through $c$ (perpendicular to $cc'$), and within distance $\varepsilon$ from $c$ (see Figure~\ref{fig:TheoNot4}). Then, we have $\|c'-a_i\|\ge\|c'-c\|$ for all $i\in\{1,\ldots,n-3\}$. 

Consider two matchings $\M_1$ and $\M_2$ such that: in $\M_1$ point $c'$ is matched to $c$ or to some $a_i$, $i\in\{1,\dots,n-3\}$; and in $\M_2$ point $c'$ is matched to $a$ or to $b$. Given $p\in R\cup B$, and a matching $\M$ of $R\cup B$, let $\M(p)$ be the point such that $p$ and $\M(p)$ are matched in $\M$. Clearly, it holds that
\[
    \|c'-\M_1(c')\|~\ge~\|c'-c\|~=~3-\sqrt{3}~~\text{and}~~\|a-\M_1(a)\|~\ge~ \|a-c\|-\varepsilon~=~2-\varepsilon.
\]
Similarly, $\|b-\M_1(b)\|\ge 2-\varepsilon$. Then, $\cost(\M_1)\ge 3-\sqrt{3}+2(2-\varepsilon)=7 - \sqrt{3} - 2\varepsilon$.

By symmetry, we can assume that $(a,c')\in\M_2$. In $\M_2$, point $b$ is matched to $a'$, to $b'$, or to some $a'_i$, $i\in\{1,\dots,n-3\}$. Then, we have that
\[
    \|b-\M_2(b)\|~\le~\|b-b'\| ~\le~ \|b-c\|+\varepsilon ~=~ 2+\varepsilon.
\]
Furthermore, the members of the remaining $n-2$ pairs of $\M_2$ are within distance $\varepsilon$ from $c$. Thus, $\cost(\M_2)\le \sqrt{10}+2+\varepsilon+2(n-2)\varepsilon$. We choose $\varepsilon>0$ such that
\[
	7 - \sqrt{3} - 2\varepsilon ~>~  \sqrt{10}+2+\varepsilon+2(n-2)\varepsilon,
\]
or what is the same,
\[
	5 - \sqrt{3} - \sqrt{10} ~>~ (2n-1) \varepsilon.
\]
Since $5-\sqrt{3}-\sqrt{10}\approx 0.10567$, we can choose $\varepsilon<\frac{1}{10(2n-1)}$ to ensure $\cost(\M_1)>\cost(\M_2)$.

Therefore, $\E \ne \M_2$ and we can assume that $\E=\M_1$. To show that the intersection of the disks in $D_{\E}$ is empty, it suffices to show that the disks $D_{a\M_1(a)},D_{b\M_1(b)}$ and $D_{c'\M_1(c')}$ do not have a common point. To this end, note that (which is straightforward to prove by construction) that all points of $D_{a\M_1(a)}\cap D_{c'\M_1(c')}$ have negative $x$-coordinates, and all points of $D_{b\M_1(b)}\cap D_{c'\M_1(c')}$ have positive $x$-coordinates. The result thus follows.
\end{proof}

\section{On Fingerhut's Conjecture}\label{sec3}

In this section, we first show that the constant in inequality~\eqref{eqEppstein} can be improved to $\sqrt 5\approx 2.236$. We start by proving the next technical lemma:

\begin{lemma}\label{lem:sqrt5}
Let $p$ and $q$ be two points of the plane, and let $r_{pq}$ be the radius of disk $D_{pq}$. Let $D$ be a second disk with center $o$ and radius $r\le r_{pq}$ such that $D\cap D_{pq}\neq \emptyset$. Then,
\[
	\|p-o\|+\|q-o\| ~\le~ \sqrt{5}\cdot \|p-q\|.
\]
\end{lemma}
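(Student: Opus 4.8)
The plan is to bound $\|p-o\|+\|q-o\|$ by placing both distances in terms of the radius $r_{pq}$ and the distance between centers. Let $m=\frac{p+q}{2}$ be the center of $D_{pq}$, so that $\|p-m\|=\|q-m\|=r_{pq}$ and $\|p-q\|=2r_{pq}$. First I would use the intersection hypothesis $D\cap D_{pq}\neq\emptyset$, which is equivalent to $\|o-m\|\le r+r_{pq}\le 2r_{pq}$ (using $r\le r_{pq}$). Thus the center $o$ lies within distance $2r_{pq}$ of $m$.

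\textbf{Bounding the sum of distances from a point near $m$.}
The key step is to maximize $f(o)=\|p-o\|+\|q-o\|$ subject to $\|o-m\|\le 2r_{pq}$. Since $f$ is a convex function of $o$, its maximum over the disk of radius $2r_{pq}$ centered at $m$ is attained on the boundary circle $\|o-m\|=2r_{pq}$. I would set up coordinates with $m$ at the origin, $p=(-r_{pq},0)$ and $q=(r_{pq},0)$, and write $o=(2r_{pq}\cos\theta,\,2r_{pq}\sin\theta)$. Then $f(o)^2$ or $f(o)$ itself can be analyzed as a function of $\theta$; by the symmetry $p\leftrightarrow q$ the extremum occurs either at $\theta=0$ (collinear, $o$ on the far side along the line $pq$) or at $\theta=\pi/2$ (on the perpendicular bisector).

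\textbf{Comparing the two candidate extrema.}
At $\theta=\pi/2$ we get $\|p-o\|=\|q-o\|=\sqrt{r_{pq}^2+4r_{pq}^2}=r_{pq}\sqrt{5}$, so $f(o)=2\sqrt{5}\,r_{pq}=\sqrt{5}\cdot\|p-q\|$, which matches the claimed bound exactly. At $\theta=0$ we get $f(o)=\|p-o\|+\|q-o\|=3r_{pq}+r_{pq}=4r_{pq}=2\|p-q\|<\sqrt5\,\|p-q\|$. So the perpendicular-bisector position gives the larger value and determines the tight constant $\sqrt5$. I would verify that $\theta=\pi/2$ is indeed the global maximum (not just a critical point) by checking the second-order behavior or by directly arguing that moving $o$ off the bisector decreases the smaller of the two distances faster than it increases the larger; this is the one place where a short calculation is genuinely needed.

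\textbf{Main obstacle.}
The main obstacle is confirming that the constrained maximum of the convex function $f$ over the disk of radius $2r_{pq}$ is attained at the top of the perpendicular bisector rather than at some intermediate angle. Convexity guarantees the maximum is on the boundary circle, but locating it there requires either an explicit one-variable optimization of $g(\theta)=\sqrt{5-4\cos\theta}+\sqrt{5+4\cos\theta}$ (up to the factor $r_{pq}$) or a clean symmetry/concavity argument showing $g$ is maximized at $\theta=\pi/2$. I expect this to reduce to showing $g(\theta)^2=10+2\sqrt{25-16\cos^2\theta}$ is decreasing in $\cos^2\theta$, which is immediate, thereby pinning the maximum at $\cos\theta=0$ and yielding the bound.
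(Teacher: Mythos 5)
Your proof is correct, and the one step you flagged as the ``main obstacle'' is indeed closed by your own observation that $g(\theta)^2=10+2\sqrt{25-16\cos^2\theta}$ is maximized at $\cos\theta=0$; together with the convexity argument placing the maximizer on the boundary circle, this is a complete argument. However, it is a genuinely different route from the paper's. The paper avoids coordinates and optimization entirely: from the same starting inequality $\|o-o_{pq}\|\le r+r_{pq}\le\|p-q\|$, it writes $(\|p-o\|+\|q-o\|)^2\le 2(\|p-o\|^2+\|q-o\|^2)$ by Cauchy--Schwarz, evaluates the right-hand side exactly as $\frac{1}{2}\|p-q\|^2+2\|o-o_{pq}\|^2$ by Apollonius' theorem (the median-length identity for triangle $poq$), and bounds this by $5\|p-q\|^2$. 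Note that these are the same two ingredients in disguise: your quantity $10r_{pq}^2=\|p-o\|^2+\|q-o\|^2$ on the boundary circle is exactly the Apollonius value at $\|o-o_{pq}\|=2r_{pq}$, and the Cauchy--Schwarz step is tight precisely when $\|p-o\|=\|q-o\|$, which is your $\theta=\pi/2$ configuration. What your version buys is an explicit identification of the extremal position of $o$ (on the perpendicular bisector at distance $2r_{pq}$), showing directly that $\sqrt5$ is attained; what the paper's version buys is brevity and freedom from any case analysis or second-order check.
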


\begin{proof}
Let $o_{pq}$ denote the center of $D_{pq}$, and note that $\|o-o_{pq}\|\le r+r_{pq}\le 2\cdot r_{pq}=\|p-q\|$, because $D\cap D_{pq}\neq \emptyset$. Then, we have that:
\begin{align*}
	(\|p-o\|+\|q-o\|)^2 & ~\le~ 2\cdot (\|p-o\|^2+\|q-o\|^2) & \text{(Cauchy-Schwarz's inequality)} \\
	        & ~=~ 2\left(\frac{1}{2}\|p-q\|^2+2\cdot \|o-o_{pq}\|^2\right) & \text{(Apollonius' theorem)} \\
			& ~=~ \|p-q\|^2+4\cdot \|o-o_{pq}\|^2 \\
			& ~\le~ 5\cdot \|p-q\|^2,
\end{align*}
which implies the lemma.
\end{proof}

Lemma~\ref{lem:sqrt5} and Proposition~\ref{prop1} imply the following theorem.

\begin{theorem}\label{theo:sqrt5}
For any $n\ge 2$, and any planar $n$-point sets $R$ and $B$ such that $R\cap B=\emptyset$, consider any max-sum matching $\E=\{(a_i,b_i), i=1,\dots,n\}$ of $R\cup B$.  Let $o$ be the midpoint of the shortest segment in the matching $\E$. Then, for all $i\in\{1,\ldots,n\}$, 
\[
    \|a_i-o\|+\|b_i-o\|~\le~ \sqrt 5\cdot \|a_i-b_i\|.
\]
\end{theorem}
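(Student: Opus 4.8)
The plan is to combine the two results the paper has already established, namely Proposition~\ref{prop1} (every pair of disks in $D_\E$ intersects) and Lemma~\ref{lem:sqrt5} (a distance bound for two intersecting disks where the second has radius at most that of the first). The theorem asks us to bound $\|a_i-o\|+\|b_i-o\|$ for \emph{every} index $i$, where $o$ is the midpoint of the shortest matched segment. First I would fix an arbitrary index $i\in\{1,\ldots,n\}$ and let $(a_j,b_j)$ denote the shortest segment in the matching, so that $o=\frac{a_j+b_j}{2}$ is the center of the disk $D_{a_jb_j}$, whose radius $r_{a_jb_j}=\frac{1}{2}\|a_j-b_j\|$ is minimal among all disks in $D_\E$ by the choice of $j$.

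Next I would set up the hypotheses of Lemma~\ref{lem:sqrt5} with the substitution $p=a_i$, $q=b_i$, and the ``second disk'' $D$ taken to be $D_{a_jb_j}$ with center $o$ and radius $r=r_{a_jb_j}$. Two conditions must be checked. The radius condition $r\le r_{pq}$ holds because $r=r_{a_jb_j}$ is the smallest radius among all disks of $D_\E$, so in particular $r_{a_jb_j}\le r_{a_ib_i}$. The intersection condition $D\cap D_{pq}\neq\emptyset$ is exactly the statement that $D_{a_jb_j}\cap D_{a_ib_i}\neq\emptyset$, which is guaranteed by Proposition~\ref{prop1} since these are two disks of $D_\E$. (When $i=j$ the inequality is trivial, since $o$ is then the center of $D_{a_ib_i}$ and the left-hand side equals $\|a_i-b_i\|\le\sqrt5\,\|a_i-b_i\|$; one may note this separately or simply observe the lemma still applies with $D=D_{pq}$.) Applying Lemma~\ref{lem:sqrt5} then yields $\|a_i-o\|+\|b_i-o\|\le\sqrt5\cdot\|a_i-b_i\|$, and since $i$ was arbitrary the bound holds for all $i$.

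There is no serious obstacle here: the entire argument is a clean application of the two preceding results, and the only point requiring a moment's care is verifying that the roles of the two disks in Lemma~\ref{lem:sqrt5} are assigned correctly. The lemma is stated asymmetrically, requiring the disk centered at $o$ to have the \emph{smaller} radius; the whole reason for choosing $o$ as the midpoint of the \emph{shortest} segment is precisely to make the disk $D_{a_jb_j}$ play the role of the smaller disk $D$, so that $r\le r_{pq}$ is satisfied for every competitor disk $D_{a_ib_i}$ simultaneously. Keeping the indices straight is the main thing to get right; once that is in place the theorem follows immediately.
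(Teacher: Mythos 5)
Your proposal is correct and is exactly the argument the paper intends: the paper simply states that Lemma~\ref{lem:sqrt5} and Proposition~\ref{prop1} imply the theorem, and your write-up supplies precisely the details of that implication (assigning the shortest segment's disk the role of $D$ so that the radius condition holds, and invoking Proposition~\ref{prop1} for the intersection condition). No gaps.
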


Note that the bound of $\sqrt{5}$ is tight if the point $o$ is always the midpoint of the shortest segment. Namely, consider 2 red points and 2 blue points, as vertices of a square, such that diagonal-opposed vertices have the same color.

We now strengthen Theorem~\ref{theo:sqrt5}. Precisely, we prove that for any point set $P$ of $2n$ uncolored points in the plane and a max-sum matching $\M=\{(a_i,b_i),i=1,\dots,n\}$ of $P$, all disks in $D_{\M}$ have a common intersection. We will use Helly's theorem, that is, we will prove the claim for $n=3$.

We note that $P$ might contain different points with the same coordinates. Furthermore, the common intersection might be a singleton. Namely, consider six points $a$, $b$, $c$, $a'$, $b'$, and $c'$, where $a$, $b$, and $c$ are the vertices of a non-empty triangle, and $a'$, $b'$, and $c'$ coincide with a point $z$ in the interior of the triangle. By the triangle inequality, the matching $\{(a,a'),(b,b'),(c,c')\}$ is max-sum, and $z$ is the only point in the common intersection $D_{aa'}\cap D_{bb'}\cap D_{cc'}$. For any matching $\M$, the {\em segments} of $\M$ is the segment set $\{pq\mid (p,q)\in\M\}$.

Let $p$, $q$, and $r$ be three points of the plane. We denote by $\Delta pqr$ the triangle with vertices $p$, $q$, and $r$; by $\ell(p,q)$ the straight line through $p$ and $q$ oriented from $p$ to $q$; by $\tau(p,q)$ the ray with apex $p$ that goes through $q$; by $\vec{pq}$ the segment $pq$ oriented from $p$ to $q$; and by $C_{pq}$ the circle bounding $D_{pq}$. Furthermore, given a fourth point $s$, we say that $\vec{pq}$ {\em points to} $rs$ if $q$ is in the interior of $\Delta prs \cap D_{rs}$. See Figure~\ref{fig:lem:point-in-disk} (left), where segment $\vec{cd}$ points to $ab$.

The first observation that we state is that if $\M$ is a max-sum matching, then the disks of $D_{\M}$ intersect pairwise, as a consequence of Proposition~\ref{prop1}. Second, every pair of segments of $\M$ either cross or one oriented segment points to the other one, as formally proved in the following lemma. Third, the four vertices of any two segments cannot be in convex position, because the matching would not be max-sum by the triangle inequality.  

\begin{lemma}\label{lem:point-in-disk}
Let $\{a,b,c,d\}$ be a set of four points such that $\{(a,b),(c,d)\}$ is a max-sum matching of $\{a,b,c,d\}$ and $d$ belongs to the interior of $\Delta abc$. Then, $d$ belongs to the interior of disk $D_{ab}$. That is, $\vec{cd}$ points to $ab$.
\end{lemma}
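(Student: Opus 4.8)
The plan is to reduce the statement to the single inequality $\angle adb>\pi/2$ and then argue by contradiction. Since the disk $D_{ab}$ is exactly the locus of points that see the segment $ab$ under an angle at least $\pi/2$, we have $d\in\operatorname{int}(D_{ab})$ if and only if $\angle adb>\pi/2$, equivalently $(a-d)\cdot(b-d)<0$, equivalently $\|a-b\|^2>\|a-d\|^2+\|b-d\|^2$. Once this is proved, $d$ lies in the interiors of both convex sets $\Delta abc$ (by hypothesis) and $D_{ab}$, hence in the interior of their intersection, which is precisely the assertion that $\vec{cd}$ points to $ab$. So the whole content is to establish $\angle adb>\pi/2$.

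Suppose, for contradiction, that $\angle adb\le\pi/2$. First I would read off the local geometry at $d$: because $d$ lies in the interior of $\Delta abc$, the three angles $\angle adb$, $\angle adc$, $\angle bdc$ are each smaller than $\pi$ and sum to $2\pi$, so $\angle adb\le\pi/2$ forces both $\angle adc>\pi/2$ and $\angle bdc>\pi/2$. By the law of cosines this means $\|a-c\|^2>\|a-d\|^2+\|c-d\|^2$ and $\|b-c\|^2>\|b-d\|^2+\|c-d\|^2$. On the other hand, comparing the max-sum matching with its two alternatives gives $\|a-c\|+\|b-d\|\le\|a-b\|+\|c-d\|$ and $\|a-d\|+\|b-c\|\le\|a-b\|+\|c-d\|$; since the left sides are nonnegative, the quantities $\|a-b\|+\|c-d\|-\|b-d\|$ and $\|a-b\|+\|c-d\|-\|a-d\|$ are nonnegative and dominate $\|a-c\|$ and $\|b-c\|$. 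Squaring these two bounds and inserting the strict lower bounds just obtained, a short computation shows that $\|a-b\|$ exceeds both $\|a-d\|$ and $\|b-d\|$, so $ab$ is the longest side of $\Delta abd$.

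To finish I would use the interior hypothesis quantitatively. Translating $d$ to the origin, the condition that $d$ is interior reads $c=-\lambda a-\mu b$ with $\lambda,\mu>0$, where $a,b,c$ now denote the translated position vectors. Expanding $\|a-c\|^2$ and $\|b-c\|^2$ with this expression and combining once more with the max-sum bounds yields the clean estimates $\lambda\|a-d\|^2\le(\|a-b\|-\|b-d\|)(\|c-d\|-\|b-d\|)$ and $\mu\|b-d\|^2\le(\|a-b\|-\|a-d\|)(\|c-d\|-\|a-d\|)$, whose right-hand sides are positive (in particular $\|c-d\|$ exceeds $\|a-d\|$ and $\|b-d\|$). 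The decisive point is then a collision: the triangle inequality $\|c-d\|=\|\lambda a+\mu b\|\le\lambda\|a-d\|+\mu\|b-d\|$, after substituting the two displayed bounds on $\lambda,\mu$, caps $\|c-d\|$ from above by an expression in $\|a-b\|,\|a-d\|,\|b-d\|,\|c-d\|$ itself, and rearranging produces a single relation among these four lengths. I expect orchestrating this last algebraic step to be the main obstacle. It closes using that $ab$ is the longest side of $\Delta abd$ together with the standing assumption $\|a-b\|^2\le\|a-d\|^2+\|b-d\|^2$: splitting on the sign of $\|a-d\|^2+\|b-d\|^2+\|a-d\|\,\|b-d\|-\|a-b\|(\|a-d\|+\|b-d\|)$, one case contradicts $\|a-b\|>\max\{\|a-d\|,\|b-d\|\}$ directly, and the other is ruled out by the elementary fact $\frac{x^2+y^2+xy}{x+y}>\sqrt{x^2+y^2}$ for $x,y>0$, giving the desired contradiction.
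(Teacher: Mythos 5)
Your proposal is correct, but it takes a genuinely different route from the paper's. The paper argues synthetically in a few lines: reflect $d$ about the midpoint $m$ of $ab$ to a point $d'$, so that $a\,d'\,b\,d$ is a parallelogram with $\|a-d'\|=\|b-d\|$ and $\|d-d'\|=2\|d-m\|$; since $d$ lies inside $\Delta acd'$, the classical fact that an interior point of a triangle has smaller distance-sum to two of the vertices gives $\|c-d\|+\|d-d'\|<\|c-a\|+\|a-d'\|$, and the max-sum inequality $\|c-a\|+\|b-d\|\le\|a-b\|+\|c-d\|$ then yields $\|d-d'\|<\|a-b\|$, i.e.\ $\|d-m\|$ is less than the radius of $D_{ab}$. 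You instead reduce the claim to the angle at $d$ in $\Delta abd$ being obtuse and run a contradiction through the law of cosines, barycentric coordinates, and an algebraic endgame. I checked the steps you left as ``a short computation'': writing $A=\|a-d\|$, $B=\|b-d\|$, $C=\|c-d\|$, $L=\|a-b\|$, one does get $L>B$ from $(L+C-B)^2>A^2+C^2$ (and symmetrically $L>A$); the estimate $\lambda A^2\le(L-B)(C-B)$ follows from $\|a-c\|^2\ge A^2+C^2+2\lambda A^2$ (valid because $a\cdot b\ge0$ once $d$ is translated to the origin) combined with $(L+C-B)^2\ge\|a-c\|^2$ and $A^2\ge L^2-B^2$; and the endgame does close, because the right-hand side of $C\le\frac{(L-B)(C-B)}{A}+\frac{(L-A)(C-A)}{B}$ equals $CS-T$ with $T>0$ and $S=\frac{L(A+B)-A^2-B^2}{AB}<1$, the latter being exactly your elementary inequality $\frac{A^2+AB+B^2}{A+B}>\sqrt{A^2+B^2}\ge L$ --- so in fact no case split is needed and the contradiction $C<C$ falls out directly. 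In short: both arguments are valid; the paper's is shorter and uses only the triangle inequality and a reflection, while yours is purely metric and self-contained but considerably longer once the omitted algebra is written out, and you should fill in those computations explicitly before calling it a proof.
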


\begin{proof}
Let $d'$ be the reflection of $d$ about the midpoint $m=(a+b)/2$ of segment $ab$, also the center of $D_{ab}$, and assume w.l.o.g.\ that $d$ belongs to triangle $\Delta acd'$. Refer to Figure~\ref{fig:lem:point-in-disk}. Then,
\[
	\|c-d\| + \|d-d'\| ~<~ \|c-a\| + \|a-d'\|.
\]
Namely, the perimeter of triangle $\Delta dcd'$ is smaller than the perimeter of triangle $\Delta acd'$, which implies the equation. Using that $\|a-d'\|=\|d-b\|$ because the quadrilateral with vertex set $\{a,d',b,d\}$ is a parallelogram (or rhomboid), and that $\|c-a\| + \|d-b\|\le \|a-b\| + \|c-d\|$ because $\{(a,b),(c,d)\}$ is a max-sum matching, the equation can be extended to 
\begin{align*}
	\|c-d\| + \|d-d'\| &  ~<~ \|c-a\| + \|a-d'\|\\
					   & ~=~ \|c-a\| + \|d-b\| \\
					   & ~\le~ \|a-b\| + \|c-d\|,
\end{align*}
which implies that $\|d-d'\|<\|a-b\|$. That is, $\|d-m\|=\|d-d'\|/2$ is smaller than the radius $\|a-m\|=\|a-b\|/2$ of $D_{ab}$, whose center is $m$. These observations imply the result.
\end{proof}

\begin{figure}[t]
	\centering
	\includegraphics[scale=1.0,page=3]{img.pdf}
	\caption{\small{Proof of Lemma~\ref{lem:point-in-disk}.}}	
	\label{fig:lem:point-in-disk}
\end{figure}

As mentioned above, we will elaborate a proof for $n=3$, that is, matchings of three segments. Since every pair of segments of a max-sum matching cross, or one of them points to the other one, we distinguish ten cases of relative position (or order-type) of the three segments, as shown in Figure~\ref{fig:cases}, enumerated from {\bf (A)} to {\bf (J)}. In the rest of this section, we will first show with direct proofs that in each of the cases from {\bf (A)} to {\bf (G)} the three disks have a common point. After that, we will use a proof by contradiction for each of the cases from {\bf (H)} to {\bf (J)}.

The general idea of these latter proofs by contradiction is the following: Since the disks pairwise intersect, but they do not have a common intersection, we can extend one of the segments (by moving one of its vertices) such that the new three disks have a singleton common intersection. In this new setting, the three new segments must fall again in one of the cases from {\bf (H)} to {\bf (J)}. Then, we show that the new segments must correspond to a max-sum matching given that the original ones do, but contradictorily such a new matching is not max-sum. 

\begin{figure}[t]
	\centering
	\includegraphics[scale=0.75,page=15]{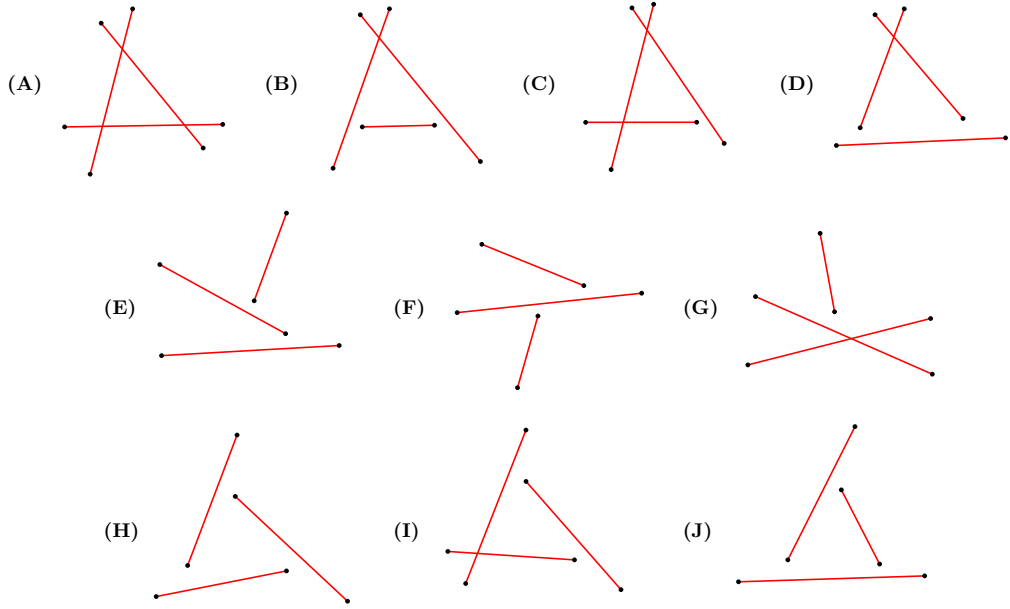}
	\caption{\small{The ten different relative positions of three segments.}}	
	\label{fig:cases}
\end{figure}

\begin{lemma}\label{lem:easy-cases}
If the segments of a max-sum matching of six points fall in one of the cases from {\bf (A)} to {\bf (G)}, then the three disks of the matching have a common intersection.
\end{lemma}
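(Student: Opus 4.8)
The plan is to handle the seven configurations {\bf (A)}--{\bf (G)} by a unified constructive argument rather than seven separate ad-hoc computations, exploiting the structure already established: the three disks pairwise intersect (Proposition~\ref{prop1} applied through the max-sum property), no two segments have their four endpoints in convex position, and for each crossing-free pair one oriented segment \emph{points to} the other (Lemma~\ref{lem:point-in-disk}). The common thread in these cases is that the three segments admit a ``central'' point --- typically a crossing point of two segments, or an endpoint that lies inside another disk --- which I will argue lies in all three disks simultaneously. So the first step is to catalogue, for each of {\bf (A)}--{\bf (G)}, what the defining incidences are: which pairs cross, and for each non-crossing pair which segment points to which. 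This classification is exactly what Figure~\ref{fig:cases} records.

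Second, I would prove a reusable sublemma: if two segments $pq$ and $rs$ cross at a point $x$, then $x \in D_{pq} \cap D_{rs}$. This is immediate because a crossing point lies on the segment $pq$, hence inside the closed disk $D_{pq}$ of diameter $pq$ (every point of a diameter is in the disk), and likewise $x \in D_{rs}$. A companion sublemma handles the ``points to'' relation: if $\vec{cd}$ points to $ab$, then by Lemma~\ref{lem:point-in-disk} the apex-side endpoint $d$ lies in the interior of $D_{ab}$, and trivially $d \in D_{cd}$ since $d$ is an endpoint of the diameter; so $d \in D_{ab} \cap D_{cd}$. With these two facts, each of the cases {\bf (A)}--{\bf (G)} reduces to exhibiting one distinguished point $o$ (a crossing or a pointing endpoint) and verifying that it lies in the \emph{third} disk as well. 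I expect several of the cases to share a witness: for instance, when all three segments pairwise cross, the three pairwise crossing points can be shown to bound a small triangle all of whose vertices lie in the relevant disks, and one uses pairwise intersection plus a Helly-in-the-plane-style argument on this bounded region.

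Third, for the cases where the third containment is not automatic, I would argue via the geometry of the ``pointing'' endpoints together with the fact that the four endpoints of any two segments are never in convex position. Concretely, in a configuration where $\vec{cd}$ points to $ab$ and a third segment $ef$ interacts with both, the endpoint $d$ (already in $D_{ab}\cap D_{cd}$) can be shown to sit inside $D_{ef}$ by re-applying Lemma~\ref{lem:point-in-disk} to the relevant triple, or by using that $d$ lies deep in the configuration (inside a triangle formed by the other points) so that its distance to the center of $D_{ef}$ is controlled by the max-sum inequality. The role of the non-convex-position fact is to guarantee that these nesting relations are mutually consistent, i.e.\ that one endpoint really is simultaneously interior to all three triangles whose disks we care about.

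The main obstacle I anticipate is not any single containment --- each is a short argument --- but the case analysis itself: showing that {\bf (A)}--{\bf (G)} genuinely exhaust the ``easy'' combinatorial types and that in every one of them a \emph{single} common witness point exists, as opposed to merely pairwise intersections. The delicate part is the transition between a crossing-based witness and a pointing-based witness when a configuration mixes the two (some pairs cross, others point), since then I must verify that the crossing point of one pair lands inside the disk of a segment that does not cross it but is only related by pointing; here I expect to lean on Lemma~\ref{lem:point-in-disk} applied to a carefully chosen sub-quadruple, and on the max-sum inequality $\|a-b\|+\|c-d\|\ge\|a-c\|+\|b-d\|$ to bound the relevant center-to-point distance by the disk radius. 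The harder cases {\bf (H)}--{\bf (J)}, which require the contradiction argument sketched before the lemma, are deliberately excluded from this statement and handled separately.
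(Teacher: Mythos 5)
Your two sublemmas (a crossing point lies in both of its disks; a pointing endpoint lies in both of its disks, by Lemma~\ref{lem:point-in-disk}) are correct and are indeed what the paper uses for cases \textbf{(E)}--\textbf{(G)}, where one oriented segment points to both others and its head is the common witness. But your plan breaks down exactly where you sense trouble: in cases \textbf{(A)}--\textbf{(D)} no crossing point and no endpoint is guaranteed to lie in all three disks, and your proposed fallback for the all-crossing case --- ``pairwise intersection plus a Helly-in-the-plane-style argument'' --- is circular. Helly in the plane needs every \emph{three} convex sets to share a point; for three disks it asserts nothing, and three pairwise-intersecting disks need not have a common point (that is precisely the phenomenon exhibited in Theorem~\ref{theo:not3}). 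Knowing that each crossing point $u_{ij}$ lies in $D_i\cap D_j$ gives you nothing about the third disk.

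The missing idea is a third kind of witness: the \emph{foot of an altitude}, certified by Thales' theorem (a point $x$ lies in $D_{pq}$ iff the angle $\angle pxq$ is at least $\pi/2$). In case \textbf{(A)} the paper takes an altitude of the inner triangle $T$ bounded by the three segments that passes through the interior of $T$ (at least one always does); its foot $u$ lies on one segment, hence in that disk, and each of the other two segments contains the altitude's apex and one further vertex of $T$, so it subtends an angle of at least $\pi/2$ at $u$, putting $u$ in those disks too. In cases \textbf{(B)} and \textbf{(D)}, after disposing of the sub-cases where an endpoint already works, the paper drops the perpendicular from the crossing point $u$ of the two crossing segments onto the segment joining two points that are already known (via Lemma~\ref{lem:point-in-disk} and convexity of the disks) to lie in the right disks, and applies Thales again; the acuteness needed for the foot to land inside that segment is exactly what the failure of the endpoint sub-cases provides. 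Without this altitude-plus-Thales device your argument cannot close cases \textbf{(A)}--\textbf{(D)}, and the appeal to the max-sum inequality to ``bound the center-to-point distance'' in your third paragraph is not developed into anything that substitutes for it.
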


\begin{proof}
Let $\{a,b,c,a',b',c'\}$ be a 6-point set, and let $\M=\{(a,a'),(b,b'),(c,c')\}$ be a max-sum matching. In any case, refer to Figure~\ref{fig:lem-easy-cases} for the location of each point.

Case~{\bf (A)}: At least one altitude of the triangle $T$ bounded by the three segments goes through the interior of $T$. Let $u$ be the vertex of such an altitude in a side of $T$. By Thales' theorem, each of the three disks $D_{aa'}$, $D_{bb'}$, and $D_{cc'}$ contains $u$.

Case~{\bf (B)}: Let $u$ be the intersection point between $bb'$ and $cc'$. If $D_{bb'}$ contains $a$, then we are done since $a\in D_{cc'}$ because $\vec{a'a}$ points to $cc'$ (Lemma~\ref{lem:point-in-disk}). Similarly, if $D_{cc'}$ contains $a'$, then we are done since $a'\in D_{bb'}$ because $\vec{aa'}$ points to $bb'$. Otherwise, if $a\notin D_{bb'}$ and $a'\notin D_{cc'}$, then the triangle $\Delta aa'u$ is such that the interior angles at $a$ and $a'$, respectively, are both acute. Hence, the altitude $h$ from vertex $u$ goes through the interior of $\Delta aa'u$, and let $v\in aa'$ be the other vertex of $h$. Since $\vec{a'a}$ points to $cc'$, and $\vec{aa'}$ points to $bb'$, each of the disks $D_{aa'}$, $D_{bb'}$, and $D_{cc'}$ contains $v$, by Thales' theorem. The proof for Case~{\bf (C)} is both similar and simpler.

Case~{\bf (D)}: If $D_{bb'}$ contains $c'$, then we are done since $c'\in D_{aa'}$ because $\vec{cc'}$ points to $aa'$ (Lemma~\ref{lem:point-in-disk}). Similarly, if $D_{cc'}$ contains $b$, then we are done since $b\in D_{aa'}$ because $\vec{b'b}$ points to $aa'$. Otherwise, if $c'\notin D_{bb'}$ and $b\notin D_{cc'}$, then the triangle $\Delta c'bu$ is such that the interior angles at $c'$ and $b$, respectively, are both acute. Hence, the altitude $h$ from vertex $u$ goes through the interior of $\Delta c'bu$, and let $v\in c'b$ be the other vertex of $h$. We have $v\in D_{bb'}\cap D_{cc'}$, by Thales' theorem. Furthermore, since $c',b\in D_{aa'}$, we have that segment $c'b$ is contained in $D_{aa'}$. Hence, $v\in D_{aa'}\cap D_{bb'}\cap D_{cc'}$.

Cases~{\bf (E)},~{\bf (F)}, and~{\bf (G)}: In each of these cases, the same oriented segment points to each of the other two ones: Say, segment $\vec{aa'}$ points to both $bb'$ and $cc'$. Then, we have that $a'\in D_{bb'}\cap D_{cc'}$, by Lemma~\ref{lem:point-in-disk}. Hence, $a'\in D_{aa'}\cap D_{bb'}\cap D_{cc'}$. This completes the proof.
\end{proof}

\begin{figure}[t]
	\centering
	\includegraphics[scale=0.7,page=16]{img.pdf}
	\caption{\small{Proof of Lemma~\ref{lem:easy-cases}.}}	
	\label{fig:lem-easy-cases}
\end{figure}

We now prove several technical lemmas, which will be used for the cases from {\bf (H)} to {\bf (J)}. The following lemma guarantee that if we extend one segment by moving one of the points, then the resulting segments correspond to a max-sum matching of the resulting point set. The rest of the lemmas impose monotone properties, and situations in which $\{(a,a'),(b,b'),(c,c')\}$ is not a max-sum matching of point set $\{a,b,c,a',b',c'\}$, all of them associated with cases from {\bf (H)} to {\bf (J)}.

\begin{lemma}\label{lem:extending}
Let $\M=\{(a_i,b_i),i=1,\dots,n\}$ be a max-sum matching of the set $P$ of $2n$ uncolored points, and let $c\notin P$ be a point such that $b_1$ belongs to the interior of the segment $a_1c$. Then, $\M^*=(\M\setminus\{(a_1,b_1)\})\cup\{(a_1,c)\}$ is a max-sum matching of $(P\setminus\{b_1\})\cup\{c\}$. 
\end{lemma}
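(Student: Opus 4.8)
The plan is to prove this by a direct exchange argument, showing that the proposed matching $\M^*$ achieves the maximum possible cost among matchings of the modified point set $(P\setminus\{b_1\})\cup\{c\}$. The key structural fact I would exploit is that moving $b_1$ outward along the ray from $a_1$ to its new position $c$ can only \emph{increase} the length of the pair $(a_1,\cdot)$, and that this move interacts favorably with the optimality of the original matching $\M$.

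First I would set up the comparison carefully. Let $\M^*$ be an arbitrary max-sum matching of $P' := (P\setminus\{b_1\})\cup\{c\}$, and suppose for contradiction that $\cost(\M^*) > \cost((\M\setminus\{(a_1,b_1)\})\cup\{(a_1,c)\})$. In $\M^*$, the point $c$ is matched to some point $x\in P\setminus\{b_1\}$. The idea is to build from $\M^*$ a matching of the \emph{original} set $P$ by replacing $c$ with $b_1$ in its pair, i.e.\ consider $\M' := (\M^*\setminus\{(x,c)\})\cup\{(x,b_1)\}$, which is a valid matching of $P$. I would then compare $\cost(\M')$ against $\cost(\M)$, using the optimality of $\M$ to derive $\cost(\M')\le\cost(\M)$, and separately relate the cost changes induced by swapping $b_1\leftrightarrow c$ on both sides.

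The crucial geometric ingredient is that $b_1$ lies in the interior of segment $a_1c$, so the three points are collinear with $b_1$ between $a_1$ and $c$. This collinearity gives the exact relation $\|a_1-c\| = \|a_1-b_1\| + \|b_1-c\|$. The heart of the argument is a comparison of the ``cost of attaching the far point.'' Going from $\M$ to $\M^*$ replaces the contribution $\|a_1-b_1\| + \|x-(\M(x))\|$ by $\|a_1-c\| + \|x-?\|$-type terms; I would track precisely which pairs change and use the triangle inequality on the point $x$ relative to the collinear triple. Specifically, I expect the inequality $\|x-c\| \le \|x-b_1\| + \|b_1-c\|$ (triangle inequality) to combine with $\|a_1-c\| = \|a_1-b_1\|+\|b_1-c\|$ so that the gain from lengthening $a_1b_1$ to $a_1c$ in $\M^*$ exactly offsets or dominates any gain $\M^*$ could extract by matching $c$ to some other point, forcing $\cost(\M^*)\le\cost(\M)+\|b_1-c\|$ with equality achieved by $\{(a_1,c)\}\cup(\M\setminus\{(a_1,b_1)\})$, contradicting strict improvement.

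The main obstacle I anticipate is bookkeeping the two-way exchange cleanly: since both $c$ and $b_1$ may be matched to different partners in the two matchings under comparison, I must chain the triangle inequality at the right vertex and invoke the max-sum optimality of $\M$ at exactly the right step, rather than conflating the ``forward'' substitution (replacing $b_1$ by $c$ in $\M^*$'s pair) with the ``backward'' one. The collinearity hypothesis is doing essential work here—it upgrades the generic triangle inequality into an equality for the $a_1$–$c$ distance, which is what makes the lengthening lossless—so I would make sure the proof breaks if $b_1$ were merely near the segment rather than exactly on it.
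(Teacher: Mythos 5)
Your proposal is correct and is essentially identical to the paper's proof: the paper also takes an arbitrary matching $\M'$ of the new point set, replaces $c$ by $b_1$ in its pair to obtain a matching of $P$, applies the triangle inequality $\|\M'(c)-c\|\le\|\M'(c)-b_1\|+\|b_1-c\|$, invokes the optimality of $\M$, and uses the collinearity to get $\cost(\M^*)=\cost(\M)+\|b_1-c\|$. The only cosmetic difference is that the paper argues directly rather than by contradiction.
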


\begin{proof}
Let $\M'$ denote any matching of $(P\setminus\{b_1\})\cup\{c\}$, and note that $(\M'\setminus\{(c,\M'(c))\})\cup\{(b_1,\M'(c))\}$ is a matching of $P$. Then,	
\begin{align*}
	\cost(\M') & ~=~ \cost\left(\M'\setminus\{(c,\M'(c))\}\right) + \|\M'(c)-c\|\\
	& ~\le~ \cost\left(\M'\setminus\{(c,\M'(c))\}\right) + \|\M'(c)-b_1\|+ \|b_1-c\| & \text{(triangle inequality)} \\
	& ~=~ \cost((\M'\setminus\{(c,\M'(c))\})\cup\{(b_1,\M'(c))\}) + \|b_1-c\|\\
	& ~\le~ \cost(\M) + \|b_1-c\| \\
	& ~=~ \cost(\M^*).
\end{align*}
Hence, the lemma follows.
\end{proof}

\begin{lemma}\label{lem:monotone-thing-1}
Let $p$, $p'$, $q$, and $q'$ be four points such that $\vec{pp'}$ points to $qq'$, and $q$ is to the right of $\ell(p,p')$. Let $z$ be a point to the left of both $\ell(p,p')$ and $\ell(q,q')$ such that: $(i)$ $q$ is to the left of $\ell(z,p)$; $(ii)$ vectors $p-z$ and $p'-z$ are orthogonal, and vectors $q-z$ and $q'-z$ are also orthogonal. Refer to Figure~\ref{fig:lem:monotone-thing-a}. Then, we have that
\begin{equation}\label{eq3}
\|p-z\|-\|q-z\| ~<~ \|p-q'\|-\|q-q'\|.
\end{equation}	
\end{lemma}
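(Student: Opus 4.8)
The plan is to fix coordinates at $z$, collapse the two‑sided distance inequality~\eqref{eq3} into a single scalar inequality, and then prove that scalar inequality by exploiting the right angles supplied by hypothesis $(ii)$. First I would put $z$ at the origin. By $(ii)$ we have $(p-z)\cdot(p'-z)=0$ and $(q-z)\cdot(q'-z)=0$, so by Thales' theorem $z$ lies on both $C_{pp'}$ and $C_{qq'}$; in particular $\vec{zq}\perp\vec{zq'}$. I choose axes with $q=(s,0)$ and, since $z$ is to the left of $\ell(q,q')$, with $q'=(0,t)$ for some $s,t>0$; then $p=(r\cos\theta,-r\sin\theta)$ where $r=\|p-z\|$ and $\theta=\angle pzq$, and $(i)$ forces $p$ to lie strictly on the side of line $zq$ opposite to $q'$, i.e.\ $\sin\theta>0$. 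Rewriting~\eqref{eq3} in the symmetric form $\|q-q'\|-\|q-z\|<\|p-q'\|-\|p-z\|$ and substituting $\|q-q'\|=\sqrt{s^2+t^2}$ and $\|p-q'\|=\sqrt{r^2+2rt\sin\theta+t^2}$, the claim becomes $\sqrt{r^2+2rt\sin\theta+t^2}-r>\sqrt{s^2+t^2}-s$. Since $(r-s)+\sqrt{s^2+t^2}$ is always positive, squaring is an equivalence and the whole lemma reduces to the single scalar inequality
\[
	rt\sin\theta ~>~ (r-s)\bigl(\sqrt{s^2+t^2}-s\bigr).
\]

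The symmetric form makes the geometry transparent: with $G(x)=\|x-q'\|-\|x-z\|$, whose level sets are hyperbolas with foci $z$ and $q'$, the claim is exactly $G(q)<G(p)$. In polar coordinates about $z$, with radius $\delta$ and angle $\psi$ measured from the ray $\vec{zq'}$, one has $G=\sqrt{\delta^2-2t\delta\cos\psi+t^2}-\delta$, and a one‑line differentiation gives $\partial_\psi G>0$ on $(0,\pi)$ and $\partial_\delta G\le0$. Hypothesis $(ii)$ makes $\psi_q=\pi/2$ exactly, whereas $\sin\theta>0$ gives $\psi_p>\pi/2$, so the angular coordinate always favours $p$. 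Hence when $r\le s$, so that the radial coordinate also favours $p$, the inequality is immediate; equivalently, the right‑hand side of the displayed inequality is then nonpositive while the left‑hand side is positive.

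The main obstacle is the regime $r>s$, where radial monotonicity opposes us and the angular advantage must be shown to win; this is where the two so‑far‑unused hypotheses enter. That $q$ is to the right of $\ell(p,p')$ while $z$ is to its left says precisely that $\ell(p,p')$ separates $z$ from $q$, so it meets segment $zq$ at the point of abscissa $\xi=\frac{r\rho}{\rho\cos\theta+r\sin\theta}\in(0,s)$, where $\rho=\|p'-z\|>0$; the bound $\xi<s$ rearranges to $sr\sin\theta>\rho\,(r-s\cos\theta)$, a lower bound on $\sin\theta$. Feeding this into the displayed inequality, together with the fact that $\vec{pp'}$ points to $qq'$ — which places $p'$ in the interior of $D_{qq'}$ and thereby keeps $\rho$ away from $0$ — should push $rt\sin\theta$ above $(r-s)\bigl(\sqrt{s^2+t^2}-s\bigr)$ and settle the case. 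I expect the delicate point to be extracting from the points‑to condition exactly the lower bound on $\theta$, in terms of $r$, $s$, and $t$, that is needed to dominate the radial loss; this is the step that will require the most care.
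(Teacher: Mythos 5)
Your reduction is correct as far as it goes: placing $z$ at the origin with $q=(s,0)$, $q'=(0,t)$, $p=(r\cos\theta,-r\sin\theta)$, condition $(i)$ does give $\sin\theta>0$, and the lemma is equivalent to $rt\sin\theta>(r-s)\bigl(\sqrt{s^2+t^2}-s\bigr)$, which is immediate for $r\le s$. But the case $r>s$ --- which you yourself identify as the main obstacle --- is not proved; it is only announced that the separation bound $sr\sin\theta>\rho\,(r-s\cos\theta)$ together with ``$p'$ interior to $D_{qq'}$'' \emph{should} suffice. In fact those two constraints alone are \emph{not} sufficient, so the plan as stated cannot close. Take $\theta\to 0^+$ with $r>s$ and $t$ fixed: the goal's left side $rt\sin\theta$ tends to $0$ while the right side stays at the fixed positive value $(r-s)(\sqrt{s^2+t^2}-s)$; meanwhile your separation constraint is satisfied by choosing $\rho<sr\sin\theta/(r-s\cos\theta)$ (which merely forces $\rho$ small, not large), and $\rho<s\sin\theta+t\cos\theta$ (the condition that $p'=\rho(\sin\theta,\cos\theta)$ lies inside $D_{qq'}$) is then also satisfied. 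What rules this configuration out is the \emph{other} half of the ``points to'' hypothesis, namely that $p'$ lies in the interior of the triangle $\Delta pqq'$ --- for small $\theta$ and small $\rho$ the point $p'$ sits near $z$, outside that triangle. You never translate this triangle-membership condition into your scalar inequality, and without it the claimed implication is false. So the missing step is not merely ``delicate''; a genuinely new input is required.

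For comparison, the paper avoids the $r\lessgtr s$ dichotomy entirely and instead splits on whether the segments $pq'$ and $qz$ cross (equivalently, whether $q$ is to the right of $\ell(z,p')$). If they cross, $p,q,q',z$ form a convex quadrilateral with diagonals $pq'$ and $qz$, and the rearranged inequality $\|p-z\|+\|q-q'\|<\|p-q'\|+\|q-z\|$ is the triangle inequality applied at the crossing point. If they do not cross, one reflects $z$ about the midpoint of $qq'$ to get $z'$ (so $\|q-z\|=\|q'-z'\|$ and $\|q-q'\|=\|z-z'\|$ by condition $(ii)$ and Thales), uses the full ``points to'' hypothesis to place $z'$ on the correct side of $\ell(p,z)$, concludes that $z$ lies inside $\Delta pz'q'$, and finishes with $\|p-z\|+\|z-z'\|<\|p-q'\|+\|q'-z'\|$. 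In your coordinates this reflection is simply $z'=(s,t)$, and adapting that argument is probably the cleanest way to rescue your write-up; pushing the purely algebraic route through would instead require you to encode $p'\in\operatorname{int}\Delta pqq'$ as an additional inequality and carry it through the estimate.
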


\begin{proof}
Rearranging terms in equation~\eqref{eq3}, we need to prove that
\begin{equation}\label{eq4}
	\|p-z\|+\|q-q'\| ~<~ \|p-q'\|+\|q-z\|.
\end{equation}
Conditions~$(i)$ and~$(ii)$, the fact that $\vec{pp'}$ points to $qq'$, and the location of $z$, imply that $q$ is to the right of $\ell(z,p')$ if and only if segments $pq'$ and $qz$ have a common point.

Suppose that $q$ is to the right of $\ell(z,p')$ (see Figure~\ref{fig:lem:monotone-thing-c}), case where segments $pq'$ and $qz$ have a common point. Then, $p$, $q$, $q'$, and $z$ are the vertices of a convex quadrilateral with non-empty interior and diagonals $pq'$ and $zq$. Hence, Equation~\eqref{eq4} holds by the triangle inequality.

Suppose now that $q$ is not to the right of $\ell(z,p')$ (see Figure~\ref{fig:lem:monotone-thing-b}), then segments $pq'$ and $qz$ do not intersect. Let $z'$ be the reflection of $z$ about the center of segment $qq'$, also the center of $D_{qq'}$. Then, we have $\|q-z\|=\|q'-z'\|$ and $\|q-q'\|=\|z-z'\|$, by condition~$(ii)$ and Thales' theorem. The fact that $p'$ belongs to the interior of $D_{qq'}$ given that $\vec{pp'}$ points to $qq'$, implies that $z'$ must be to the right of line $\ell(p,z)$. Then, since $pq'$ and $qz$ do not intersect, we have that $z$ belongs to triangle $\Delta pz'q'$. This implies
\[
	\|p-z\|+\|z-z'\|~<~\|p-q'\|+\|q'-z'\|,
\]
then equation~\eqref{eq3} and the result.
\end{proof}

\begin{figure}[t]
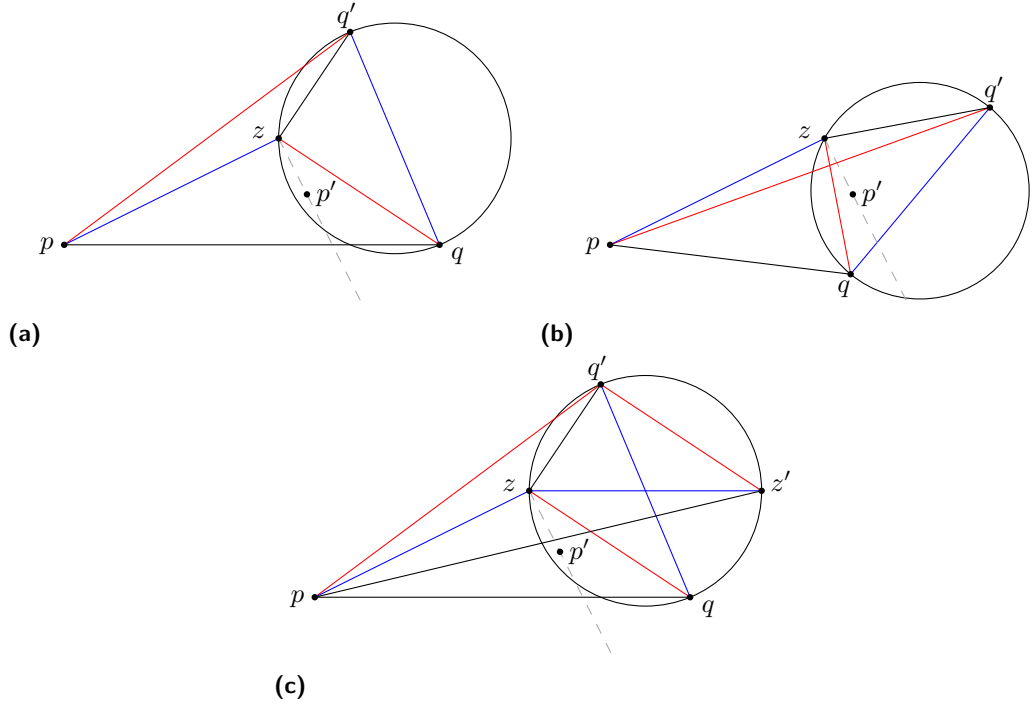

	\centering
	\begin{subfigure}[t]{0.5\textwidth}
		\centering
		\includegraphics[scale=1,page=6]{img.pdf}
		\caption{~}
		\label{fig:lem:monotone-thing-a}
	\end{subfigure}\hfill
	\begin{subfigure}[t]{0.5\textwidth}
		\centering
		\includegraphics[scale=1,page=7]{img.pdf}
		\caption{~}
		\label{fig:lem:monotone-thing-c}
	\end{subfigure}\\
	\begin{subfigure}[t]{0.5\textwidth}
		\centering
		\includegraphics[scale=1,page=5]{img.pdf}
		\caption{~}
		\label{fig:lem:monotone-thing-b}
	\end{subfigure}
	\caption{\small{
			(a) Statement of Lemma~\ref{lem:monotone-thing-1}.
			(b-c) Proof of Lemma~\ref{lem:monotone-thing-1}.
	}}
	\label{fig:lem:monotone-thing}
\end{figure}

\begin{lemma}\label{lem:monotone-thing-2}
Let $p$, $p'$, $q$, and $q'$ be four points in convex position such that $q$ and $q'$ are to the right and left of line $\ell(p,p')$, respectively. Let $z$ be a point to the left of both $\ell(p,p')$ and $\ell(q,q')$ such that: vectors $p-z$ and $p'-z$ are orthogonal, and vectors $q-z$ and $q'-z$ are also orthogonal. Then, we again have that $\|p-z\|-\|q-z\| < \|p-q'\|-\|q-q'\|$.
\end{lemma}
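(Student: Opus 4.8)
The plan is to mirror the reflection argument used for Lemma~\ref{lem:monotone-thing-1}, since the conclusion is literally the same inequality; only the hypothesis on the relative position of the two segments changes (convex position, i.e.\ crossing, instead of one segment pointing to the other). First I would rewrite the claim, exactly as in the passage from~\eqref{eq3} to~\eqref{eq4}, in the additive form $\|p-z\|+\|q-q'\| < \|p-q'\|+\|q-z\|$. Then I introduce $z'$, the reflection of $z$ about the center of $D_{qq'}$ (the midpoint of $qq'$). Since $q-z$ and $q'-z$ are orthogonal, $z$ lies on $C_{qq'}$, so $z'$ is its antipode and $qzq'z'$ is a rectangle inscribed in $C_{qq'}$; by Thales' theorem this yields $\|z-z'\|=\|q-q'\|$ and $\|q-z\|=\|q'-z'\|$. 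Substituting these identities turns the claim into
\[
	\|p-z\|+\|z-z'\| ~<~ \|p-q'\|+\|q'-z'\|,
\]
that is, the broken path $p\to z\to z'$ is shorter than $p\to q'\to z'$.

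Next I would reduce this to a single clean containment: the displayed inequality holds whenever $z$ lies in the interior of triangle $\Delta pq'z'$, by the standard fact that for a point interior to a triangle the sum of its distances to two vertices is strictly smaller than the sum of the two sides meeting at the third vertex (here the apex $q'$). Equivalently, phrasing everything in terms of the four points $p,q,q',z$, the inequality~\eqref{eq4} is precisely the convex-quadrilateral diagonal inequality (the sum of the diagonals strictly exceeds that pair of opposite sides) for the quadrilateral with diagonals $pq'$ and $qz$. Hence it suffices to show that $p,q,q',z$ are in convex position with cyclic order $p,q,q',z$, i.e.\ that segments $pq'$ and $qz$ cross. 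This is exactly the analogue of the crossing case (Case~1) in the proof of Lemma~\ref{lem:monotone-thing-1}.

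The main work, and the step I expect to be the obstacle, is establishing this convex position (equivalently $z\in\mathrm{int}\,\Delta pq'z'$) from the hypotheses. Here I would use that $p,p',q,q'$ are in convex position with $q$ and $q'$ on opposite sides of $\ell(p,p')$, so that $pp'$ and $qq'$ are the two crossing diagonals; combined with the placement of $z$—to the left of $\ell(p,p')$, hence on the same side as $q'$ and opposite $q$, and to the left of $\ell(q,q')$, hence on the same side as $p$ and opposite $p'$, i.e.\ on the correct arc of $C_{qq'}$—this pins $z$ down relative to the four points. Unlike Lemma~\ref{lem:monotone-thing-1}, whose ``points to'' hypothesis forced a split into a crossing case and a non-crossing case, the convex-position hypothesis here should rule out the non-crossing configuration and leave only the crossing one, so that the convex-quadrilateral argument (equivalently the containment $z\in\mathrm{int}\,\Delta pq'z'$) applies uniformly. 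Verifying that the hypotheses really collapse to this single case is a short but careful side-of-line analysis, and that is where essentially all the care is needed; once it is in place, the inequality is immediate.
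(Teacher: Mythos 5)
Your proposal is correct and matches the paper's own (very short) proof: the paper likewise observes that under these hypotheses the segments $pq'$ and $qz$ must cross, so the convex-quadrilateral diagonal inequality from the crossing case of Lemma~\ref{lem:monotone-thing-1} applies directly. Your detour through the reflection $z'$ is harmless but unnecessary here, since that device is only needed for the non-crossing case, which the convex-position hypothesis excludes.
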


\begin{proof}
The proof is now simpler. In this case, segments $pq'$ and $qz$ have a common point, and the proof continues as that of Lemma~\ref{lem:monotone-thing-1}.
\end{proof}

\begin{lemma}\label{lem:common-point-1}
Let $a$, $b$, $c$, $a'$, $b'$, $c'$, and $z$ be seven points such that: $c$ is to the left of line $\ell(a,b)$; segments $\vec{aa'}$, $\vec{bb'}$, and $\vec{cc'}$ point to $bb'$, $cc'$, and $aa'$, respectively; and for each $u\in\{a,b,c\}$, point $z$ is to the left of line $\ell(u,u')$, and vectors $u-z$ and $u'-z$ are orthogonal. Refer to Figure~\ref{fig:lem:common-point-1}. Then, $\{(a,a'),(b,b'),(c,c')\}$ is not a max-sum matching of point set $\{a,b,c,a',b',c'\}$.
\end{lemma}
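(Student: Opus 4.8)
The plan is to exhibit a strictly better matching, namely the cyclic rematching $\{(a,b'),(b,c'),(c,a')\}$, and to compare its cost with that of $\{(a,a'),(b,b'),(c,c')\}$ using the three monotonicity inequalities supplied by Lemma~\ref{lem:monotone-thing-1}. First I would record the meaning of the orthogonality hypothesis: for each $u\in\{a,b,c\}$, requiring $u-z$ and $u'-z$ to be orthogonal is exactly the statement that $z$ lies on the circle $C_{uu'}$ (Thales' theorem). Thus $z$ is a common point of the three boundary circles and plays, in each pair, the role of the auxiliary point ``$z$'' appearing in Lemma~\ref{lem:monotone-thing-1}.

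Next I would apply Lemma~\ref{lem:monotone-thing-1} three times, once for each cyclic pointing relation, always with the same point $z$. Using the instance $(p,p',q,q')=(a,a',b,b')$ for ``$\vec{aa'}$ points to $bb'$'', the instance $(p,p',q,q')=(b,b',c,c')$ for ``$\vec{bb'}$ points to $cc'$'', and the instance $(p,p',q,q')=(c,c',a,a')$ for ``$\vec{cc'}$ points to $aa'$'', the conclusion of the lemma gives, respectively,
\begin{align*}
\|a-z\|-\|b-z\| &~<~ \|a-b'\|-\|b-b'\|,\\
\|b-z\|-\|c-z\| &~<~ \|b-c'\|-\|c-c'\|,\\
\|c-z\|-\|a-z\| &~<~ \|c-a'\|-\|a-a'\|.
\end{align*}
Summing the three inequalities, the left-hand sides telescope to $0$, and rearranging the right-hand sides yields
\[
\|a-a'\|+\|b-b'\|+\|c-c'\| ~<~ \|a-b'\|+\|b-c'\|+\|c-a'\|,
\]
that is, $\cost(\{(a,a'),(b,b'),(c,c')\})<\cost(\{(a,b'),(b,c'),(c,a')\})$. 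Since the right-hand matching uses the same six points, the matching $\{(a,a'),(b,b'),(c,c')\}$ cannot be max-sum, which is the desired conclusion.

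The hard part will be verifying that the geometric side conditions of Lemma~\ref{lem:monotone-thing-1} actually hold in each of the three cyclic applications. For the pair $(a,a',b,b')$, for instance, I need that $b$ lies to the right of $\ell(a,a')$, that $z$ lies to the left of both $\ell(a,a')$ and $\ell(b,b')$, and that $b$ lies to the left of $\ell(z,a)$; the orthogonality parts~$(ii)$ are already given. The hypothesis that $z$ is to the left of each $\ell(u,u')$ is given directly, and the remaining orientation conditions must be read off from the configuration of Figure~\ref{fig:lem:common-point-1}, using the cyclic pointing relations together with ``$c$ to the left of $\ell(a,b)$'' and the cyclic symmetry of the setup.

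In some of the three instances the four points $p,p',q,q'$ may instead be in convex position, in which case I would invoke Lemma~\ref{lem:monotone-thing-2} in place of Lemma~\ref{lem:monotone-thing-1}. Both lemmas deliver the identical inequality $\|p-z\|-\|q-z\|<\|p-q'\|-\|q-q'\|$, so the telescoping argument is unaffected by which of the two is used at each step; the only care needed is to decide, per instance, which of the two applies.
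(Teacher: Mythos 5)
Your proposal is correct and follows essentially the same route as the paper: the paper's proof applies Lemma~\ref{lem:monotone-thing-1} to exactly the same three cyclic instances $(a,a',b,b',z)$, $(b,b',c,c',z)$, $(c,c',a,a',z)$, obtains the same three inequalities, and sums them to get the telescoped comparison with the matching $\{(a,b'),(b,c'),(c,a')\}$. Your closing hedge about possibly needing Lemma~\ref{lem:monotone-thing-2} is unnecessary in this lemma (the cyclic pointing hypotheses put all three pairs in the non-convex configuration handled by Lemma~\ref{lem:monotone-thing-1}; the convex-position variant is what Lemma~\ref{lem:common-point-2} is for), but it does not affect the argument.
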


\begin{proof}
We will prove 
\begin{equation}\label{eq5}
	\|a-a'\|+\|b-b'\|+\|c-c'\| ~<~ \|a-b'\|+\|b-c'\|+\|c-a'\|.
\end{equation}
The conditions of the lemma guarantee (three times) the conditions of Lemma~\ref{lem:monotone-thing-1}. That is, we can apply Lemma~\ref{lem:monotone-thing-1} for $a$, $a'$, $b$, $b'$, and $z$ (where $a$ and $b$ play the role of $p$ and $q$, respectively); for $b$, $b'$, $c$, $c'$, and $z$ (where $b$ and $c$ play the role of $p$ and $q$, respectively); and for $c$, $c'$, $a$, $a'$, and $z$ (where $c$ and $a$ play the role of $p$ and $q$, respectively). Hence, we obtain the following three inequalities:
\begin{align*}
	\|a-z\|-\|b-z\| & ~<~ \|a-b'\|-\|b-b'\|, & (\text{Lemma~\ref{lem:monotone-thing-1}})\\
	\|b-z\|-\|c-z\| & ~<~ \|b-c'\|-\|c-c'\|, & (\text{Lemma~\ref{lem:monotone-thing-1}})\\
	\|c-z\|-\|a-z\| & ~<~ \|c-a'\|-\|a-a'\|. & (\text{Lemma~\ref{lem:monotone-thing-1}})
\end{align*}
Adding the above three inequalities, we obtain inequality~\eqref{eq5}. The result thus follows.
\end{proof}

\begin{figure}[t]
	\begin{subfigure}[t]{0.5\textwidth}
		\centering
		\includegraphics[scale=0.75,page=8]{img.pdf}
		\caption{~}
		\label{fig:lem:common-point-1}
	\end{subfigure}\hfill
	\begin{subfigure}[t]{0.5\textwidth}
		\centering
		\includegraphics[scale=0.75,page=9]{img.pdf}
		\caption{~}
		\label{fig:lem:common-point-2}
	\end{subfigure}
	\caption{\small{
			(a) Lemma~\ref{lem:common-point-1}.
			(b) Lemma~\ref{lem:common-point-2}.
	}}
\end{figure}

\begin{lemma}\label{lem:common-point-2}
Let $a$, $b$, $c$, $a'$, $b'$, $c'$, and $z$ be seven points such that: $c$ is to the left of line $\ell(a,b)$; segments $\vec{aa'}$ and $\vec{bb'}$ point to $bb'$ and $cc'$, respectively; segments $aa'$ and $cc'$ have a common point with $a$ and $a'$ to the right and left of line $\ell(c,c')$, respectively; and for each $u\in\{a,b,c\}$, point $z$ is to the left of line $\ell(u,u')$, and vectors $u-z$ and $u'-z$ are orthogonal. Refer to Figure~\ref{fig:lem:common-point-2}. Then, $\{(a,a'),(b,b'),(c,c')\}$ is not a max-sum matching of point set $\{a,b,c,a',b',c'\}$.
\end{lemma}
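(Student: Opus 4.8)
The plan is to reproduce the structure of the proof of Lemma~\ref{lem:common-point-1}, establishing the strict inequality
\[
	\|a-a'\|+\|b-b'\|+\|c-c'\| ~<~ \|a-b'\|+\|b-c'\|+\|c-a'\|,
\]
which immediately shows that $\{(a,a'),(b,b'),(c,c')\}$ is not max-sum, since the swap to $\{(a,b'),(b,c'),(c,a')\}$ would strictly increase the total length. Exactly as before, I would obtain this by summing three cyclic inequalities whose left-hand sides cancel in the $z$-terms. The one and only departure from Lemma~\ref{lem:common-point-1} is that here the pair $aa'$, $cc'$ crosses instead of $\vec{cc'}$ pointing to $aa'$, so the corresponding cyclic inequality will come from Lemma~\ref{lem:monotone-thing-2} rather than Lemma~\ref{lem:monotone-thing-1}.

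The first two pointing relations are identical to those of Lemma~\ref{lem:common-point-1}. Since $\vec{aa'}$ points to $bb'$ and $\vec{bb'}$ points to $cc'$, and the hypotheses on $z$ (being to the left of each $\ell(u,u')$ with $u-z$ orthogonal to $u'-z$) are precisely what Lemma~\ref{lem:monotone-thing-1} requires, I would apply that lemma once with $(p,p',q,q')=(a,a',b,b')$ and once with $(p,p',q,q')=(b,b',c,c')$ to get
\begin{align*}
	\|a-z\|-\|b-z\| & ~<~ \|a-b'\|-\|b-b'\|, & (\text{Lemma~\ref{lem:monotone-thing-1}})\\
	\|b-z\|-\|c-z\| & ~<~ \|b-c'\|-\|c-c'\|. & (\text{Lemma~\ref{lem:monotone-thing-1}})
\end{align*}

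For the remaining, crossing pair I would invoke Lemma~\ref{lem:monotone-thing-2} with the role assignment $(p,p',q,q')=(c,c',a,a')$. Its hypotheses hold under the present configuration: the four points $c,c',a,a'$ are in convex position because $aa'$ and $cc'$ share a common point; the crossing hypothesis places $a$ (the role of $q$) to the right of $\ell(c,c')$ and $a'$ (the role of $q'$) to the left, as required; $z$ lies to the left of both $\ell(c,c')$ and $\ell(a,a')$; and the orthogonality conditions hold for $u=c$ and $u=a$. This yields
\[
	\|c-z\|-\|a-z\| ~<~ \|c-a'\|-\|a-a'\|.
\]
Adding this to the two inequalities above annihilates $\|a-z\|$, $\|b-z\|$, $\|c-z\|$ on the left and reproduces exactly the displayed strict inequality, which finishes the proof.

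The only genuine obstacle, compared with the otherwise mechanical summation, is getting the role assignment for Lemma~\ref{lem:monotone-thing-2} right: one must match $(p,p',q,q')$ to $(c,c',a,a')$ rather than $(a,a',c,c')$ so that the ``right/left of $\ell(p,p')$'' orientation in that lemma lines up with the crossing hypothesis stated for $\ell(c,c')$, and so that the resulting inequality has $\|c-z\|-\|a-z\|$ on the left to close the telescoping cycle. Once this assignment is verified against the hypotheses, the argument is a verbatim analogue of Lemma~\ref{lem:common-point-1}.
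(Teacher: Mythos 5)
Your proposal is correct and matches the paper's proof essentially verbatim: two applications of Lemma~\ref{lem:monotone-thing-1} (with $(p,q)=(a,b)$ and $(p,q)=(b,c)$), one application of Lemma~\ref{lem:monotone-thing-2} with $(p,q)=(c,a)$, and summation of the three cyclic inequalities to obtain inequality~\eqref{eq5}. The role assignment you single out as the delicate point is exactly the one the paper uses.
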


\begin{proof}
The conditions of the lemma guarantee (twice) the conditions of Lemma~\ref{lem:monotone-thing-1}. That is, we can apply Lemma~\ref{lem:monotone-thing-1} for $a$, $a'$, $b$, $b'$, and $z$ (where $a$ and $b$ play the role of $p$ and $q$, respectively); and for $b$, $b'$, $c$, $c'$, and $z$ (where $b$ and $c$ play the role of $p$ and $q$, respectively). Furthermore, we can apply Lemma~\ref{lem:monotone-thing-2} for $c$, $c'$, $a$, $a'$, and $z$ (where $c$ and $a$ play the role of $p$ and $q$, respectively). Hence, we obtain the following three inequalities:
\begin{align*}
	\|a-z\|-\|b-z\| & ~<~ \|a-b'\|-\|b-b'\|, & (\text{Lemma~\ref{lem:monotone-thing-1}})\\
	\|b-z\|-\|c-z\| & ~<~ \|b-c'\|-\|c-c'\|, & (\text{Lemma~\ref{lem:monotone-thing-1}})\\
	\|c-z\|-\|a-z\| & ~<~ \|c-a'\|-\|a-a'\|. & (\text{Lemma~\ref{lem:monotone-thing-2}})
\end{align*}
Adding the three inequalities above, we obtain again inequality~\eqref{eq5}. The result thus follows.
\end{proof}

Let $\alpha$ be a planar (open or closed) curve that splits the plane into two open regions. Given a point $p$ not in $\alpha$, let $H(\alpha,p)$ denote the region (between the two above ones) that contains $p$.  

\begin{proposition}\label{prop:hyperbola}
Let $a$, $b$, $a'$, and $b'$ be four points such that $\{(a,a'),(b,b')\}$ is a max-sum matching of $\{a,b,a',b'\}$. Let $\alpha$ be the arc of the hyperbola with foci $a$ and $b$ that goes through $b'$. Then, we have that $a'\in\alpha\cup H(\alpha,b)$. 
\end{proposition}

\begin{proof}
The arc $\alpha$ is the locus of the points $x$ of the plane such that $\|a-x\|-\|b-x\|=\|a-b'\|-\|b-b'\|$. Since $\{(a,a'),(b,b')\}$ is a max-sum matching, we have that $\|a-b'\|-\|b-b'\|\le \|a-a'\|-\|b-a'\|$, which implies the proposition.
\end{proof}

\begin{figure}[t]
	\begin{subfigure}[t]{0.32\textwidth}
		\centering
		\includegraphics[scale=0.75,page=13]{img.pdf}
		\caption{~}
		\label{fig:lem:monotone-thing-d}
	\end{subfigure}\hfill
	\begin{subfigure}[t]{0.32\textwidth}
		\centering
		\includegraphics[scale=0.75,page=12]{img.pdf}
		\caption{~}
		\label{fig:lem:monotone-thing-e}
	\end{subfigure}\hfill
	\begin{subfigure}[t]{0.3\textwidth}
		\centering
		\includegraphics[scale=0.8,page=14]{img.pdf}
		\caption{~}
		\label{fig:claim}
	\end{subfigure}	
	\caption{\small{
			(a,b,c) Proof of Lemma~\ref{lem:monotone-thing-3}.
	}}
\end{figure}

\begin{lemma}\label{lem:monotone-thing-3}
Let $p$, $p'$, and $o$ be three points such that $o$ is the midpoint of segment $pp'$. Let $z$ be a point of the circle $C_{pp'}$ to the left of line $\ell(p,p')$, $q$ a point of segment $zp'$ with $q\neq p'$, and $q'$ a point of ray $\tau(o,z)$ not in segment $oz$. Then, $\|p-p'\| + \|q-q'\| < \|p-q\| + \|p'-q'\|$. 
\end{lemma}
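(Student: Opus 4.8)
The plan is to set up coordinates with $o$ at the origin, $p=(-r,0)$ and $p'=(r,0)$, so that $C_{pp'}$ is the circle of radius $r=\frac{1}{2}\|p-p'\|$ centered at $o$ and, by Thales' theorem, the hypothesis $z\in C_{pp'}$ is exactly $\angle pzp'=90^\circ$. Let $\hat{u}$ be the unit vector pointing from $o$ to $z$; then $z=o+r\hat{u}$, the ray $\tau(o,z)$ is $\{o+t\hat{u}:t\ge 0\}$, and the hypothesis on $q'$ means $q'=o+\mu\hat{u}$ for some $\mu>r$. I would parametrize $q=(1-\lambda)z+\lambda p'$ with $\lambda\in[0,1)$, where the constraint $q\neq p'$ is precisely $\lambda<1$. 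Setting $F:=\|p-q\|+\|p'-q'\|-\|q-q'\|-\|p-p'\|$, the goal is $F>0$, and I would obtain it by proving two inequalities and adding them: (I) $\|p'-q'\|-\|q-q'\|\ge(q-p')\cdot\hat{u}$, and (II) $\|p-q\|+(q-p')\cdot\hat{u}>\|p-p'\|$. Together these give $F\ge\|p-q\|+(q-p')\cdot\hat{u}-\|p-p'\|>0$.

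For (I), note that $(q-p')\cdot\hat{u}$ is the signed length of the projection of $\vec{p'q}$ onto the ray, and that (I) is really a monotonicity statement: as $q'$ recedes from $z$ along $\tau(o,z)$, the difference $\|p'-q'\|-\|q-q'\|$ stays at least as large as its limiting value $(q-p')\cdot\hat{u}$. The clean tool is the identity $\|w\|-w\cdot\hat{u}=\frac{h^2}{\|w\|+w\cdot\hat{u}}$, valid for any vector $w$ whose component orthogonal to $\hat{u}$ has length $h$. Applying it to $w=q'-p'$ and to $w=q'-q$ rewrites (I) as $\delta(p')\ge\delta(q)$, where $\delta(A):=\|A-q'\|-(q'-A)\cdot\hat{u}\ge 0$ equals the squared distance from $A$ to the line $\ell(o,z)$ divided by $\|A-q'\|+(q'-A)\cdot\hat{u}$. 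Since $z\in\ell(o,z)$ and $q=(1-\lambda)z+\lambda p'$, the distance from $q$ to $\ell(o,z)$ is $\lambda$ times that of $p'$; cancelling this common square reduces $\delta(p')\ge\delta(q)$ to $\|q-q'\|+(q'-q)\cdot\hat{u}\ge\lambda^2(\|p'-q'\|+(q'-p')\cdot\hat{u})$, which in turn follows from the two elementary bounds $\|q-q'\|\ge\lambda^2\|p'-q'\|$ and $(q'-q)\cdot\hat{u}\ge\lambda^2(q'-p')\cdot\hat{u}$; both use $\mu\ge r$ and $0\le\lambda<1$.

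For (II), I would exploit the right angle at $z$: since $q\in zp'$ and $\angle pzp'=90^\circ$, also $\angle pzq=90^\circ$, so $\|p-q\|^2=\|p-z\|^2+\|z-q\|^2$. Using $(q-p')\cdot\hat{u}=(1-\lambda)(z-p')\cdot\hat{u}$ and noting $\|p-p'\|-(q-p')\cdot\hat{u}>0$ (so that the move-and-square step is legitimate), inequality (II) reduces, after subtracting the squares, to showing $r^2(1-\lambda)^2\sin^2\theta>0$, i.e. $(1-\lambda)^2$ times the squared distance from $p'$ to $\ell(p,p')$. This is strictly positive exactly because $q\neq p'$ ($\lambda<1$) and $z$ lies strictly to the left of $\ell(p,p')$ ($\sin\theta\neq 0$), which are the two hypotheses of the lemma.

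I expect the main obstacle to be (I). Differentiating $\|p'-q'\|-\|q-q'\|$ along the ray gives a sign-indefinite expression, since sending $q'$ outward increases its distance to $p'$ faster than to $q$ in one respect, while $p'$ is also farther than $q$ from the ray; the quotient identity for $\delta$ is what separates these competing effects and turns the claim into a comparison of two positive quantities. I would also flag that the hypothesis $\mu>r$ (that $q'$ lies beyond $z$, not merely on $\tau(o,z)$) is essential for (I): the inequality fails for points of the ray strictly between $o$ and $z$. The remaining work---verifying the two bounds in (I) and the single squaring in (II)---is routine algebra.
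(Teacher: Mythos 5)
Your argument is correct, and it is genuinely different from the paper's. The paper proves this lemma synthetically via hyperbolas: it takes the branch $\alpha$ of the hyperbola with foci $p$ and $q'$ passing through $p'$, shows in two cases (according to whether $\|p-p'\|\ge\|q'-p'\|$ or not, using convexity of the region bounded by $\alpha$ in the first case and the fact that the tangent to $\alpha$ at $p'$ is the internal bisector of $\angle op'q'$ in the second) that $q$ lies on the far side of $\alpha$, and then invokes its Proposition on max-sum matchings of four points to convert this into $\|q'-q\|-\|p-q\|<\|q'-p'\|-\|p-p'\|$. You instead decompose the target into the projection inequality (I) and the Pythagorean inequality (II); I checked the algebra you deferred and it closes. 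For (I), with $h_q=\lambda h_{p'}$ the rationalization identity reduces the claim to $\|q-q'\|+(q'-q)\cdot\hat u\ge\lambda^2\bigl(\|p'-q'\|+(q'-p')\cdot\hat u\bigr)$; your second bound reduces (after dividing by $1-\lambda>0$) to $\mu(1+\lambda)\ge r(1+\lambda\cos\theta)$, which holds since $\mu>r$ and $\cos\theta\le 1$, and your first bound follows from expanding $\|q-q'\|^2=(1-\lambda)^2(\mu-r)^2+2\lambda(1-\lambda)(\mu-r)(\mu-r\cos\theta)+\lambda^2\|p'-q'\|^2\ge\lambda^4\|p'-q'\|^2$, where all three terms are nonnegative precisely because $\mu>r$. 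For (II), the difference of squares computes to $r^2(1-\lambda)^2\sin^2\theta>0$ exactly as you say (minor slip: this is $(1-\lambda)^2$ times the squared distance from $z$, not from $p'$, to $\ell(p,p')$). Your route trades the paper's case split and its reliance on hyperbola geometry for coordinate computations; it is fully self-contained and avoids Proposition~\ref{prop:hyperbola} altogether, whereas the paper's version is coordinate-free and reuses machinery already established for the other cases.
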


\begin{proof}
We divide the proof into two cases: $\|p-p'\|\ge \|q'-p'\|$; and $\|p-p'\|< \|q'-p'\|$. In both cases, let $\alpha$ be the arc of the hyperbola with foci $p$ and $q'$ that goes through $p'$.

In the first case (see Figure~\ref{fig:lem:monotone-thing-d}), let $z'$ be the intersection point of $C_{pp'}$ and $pq'$. Note that $\|p-p'\|\ge \|q'-p'\|$ implies that the region $H(\alpha,q')$ is convex. Furthermore, line $\ell(p',z')$ is perpendicular to the line $\ell(p,q')$ through the foci of $\alpha$, and this implies that $z'$ belongs to $H(\alpha,q')$, and then $z$ belongs to the convex intersection $H(\ell(p',z'),q')\cap H(\alpha,q')$. Given that $z$ belongs to the interior of $\alpha\cup H(\alpha,q')$, $p'$ is on the boundary of $\alpha\cup H(\alpha,q')$, and that $q\in zp'$, $q\neq p'$, we also have that $q$ belongs to $H(\alpha,q')$. This last fact, equivalent to $q\notin \alpha\cup H(\alpha,p)$, implies $\|q'-q\|-\|p-q\|<\|q'-p'\|-\|p-p'\|$, by Proposition~\ref{prop:hyperbola}, hence the result.

Consider now the second case, $\|p-p'\|< \|q'-p'\|$ (see Figure~\ref{fig:lem:monotone-thing-e}). Let $\beta$ be the bisector of the interior angle at $p'$ of triangle $\Delta op'q'$. By well-known properties of hyperbolas, $\beta$ is the tangent of $\alpha$ at $p'$. Furthermore, $\|p-p'\|< \|q'-p'\|$ implies that $\beta$ separates $\alpha$ from vertex $q'$. Let $z'$ be the intersection point between $oq'$ and $\beta$. We use the following claim to show that $\|o-z'\|<\|o-p'\|$: If $A$, $B$, and $C$ are the vertices of a triangle, and point $E$ belongs to side $AB$, such that line $\ell(C,E)$ is the bisector of the interior angle at $C$, then $\|C-B\|>\|B-E\|$. The claim follows from the fact that in any triangle, precisely in $\Delta BCE$, larger sides correspond to larger opposed interior angles (see Figure~\ref{fig:claim}). Applying the claim to $\Delta op'q'$, we have $\|o-z'\|<\|o-p'\|$, which implies that $\beta$ separates point $z$ and arc $\alpha$. Furthermore, $\beta$ separates point $q$ and $\alpha$ because $q\in zp'\setminus\{p'\}$. This last fact shows that $q$ is to left of $\alpha$ in the direction from $q'$ to $p$ (i.e.\ $q\notin \alpha\cup H(\alpha,p)$), implying $\|q'-q\|-\|p-q\|<\|q'-p'\|-\|p-p'\|$, by Proposition~\ref{prop:hyperbola}. The results thus follows.
\end{proof}

\begin{lemma}\label{lem:common-point-3}
Let $a$, $b$, $c$, $a'$, $b'$, $c'$, and $z$ be seven points such that: none of them is to the right of line $\ell(a,a')$; segments $\vec{b'b}$, $\vec{bb'}$, and $\vec{cc'}$ point to $aa'$, $cc'$, and $aa'$, respectively; and for each $u\in\{a,b,c\}$, point $z$ is to the left of line $\ell(u,u')$, and vectors $u-z$ and $u'-z$ are orthogonal. Refer to Figure~\ref{fig:lem:common-point-3}. Then, $\{(a,a'),(b,b'),(c,c')\}$ is not a max-sum matching of $\{a,b,c,a',b',c'\}$.
\end{lemma}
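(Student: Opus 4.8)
The plan is to prove that $\{(a,a'),(b,b'),(c,c')\}$ is not max-sum by establishing inequality~\eqref{eq5}, exactly along the lines of Lemmas~\ref{lem:common-point-1} and~\ref{lem:common-point-2}: I will produce three inequalities whose sum cancels the terms in $\|\cdot-z\|$ and leaves precisely $\|a-a'\|+\|b-b'\|+\|c-c'\| < \|a-b'\|+\|b-c'\|+\|c-a'\|$, i.e. the alternative matching $\{(a,b'),(b,c'),(c,a')\}$ is strictly longer. First I record that, by the orthogonality hypotheses and Thales' theorem, $z$ lies simultaneously on the three circles $C_{aa'}$, $C_{bb'}$, and $C_{cc'}$, to the left of each of $\ell(a,a')$, $\ell(b,b')$, $\ell(c,c')$; this is what turns each $\|u-z\|$ into a leg of the right triangle inscribed in $C_{uu'}$ and makes Lemmas~\ref{lem:monotone-thing-1}--\ref{lem:monotone-thing-3} applicable.

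Two of the three pointing relations are in the forward direction and are handled as before. From $\vec{bb'}$ points to $cc'$ I apply Lemma~\ref{lem:monotone-thing-1} (with $b,b'$ and $c,c'$ playing the roles of $p,p'$ and $q,q'$) to obtain $\|b-z\|-\|c-z\| < \|b-c'\|-\|c-c'\|$, and from $\vec{cc'}$ points to $aa'$ I apply it again to obtain $\|c-z\|-\|a-z\| < \|c-a'\|-\|a-a'\|$. In each case the side conditions of Lemma~\ref{lem:monotone-thing-1} (the second point lying on the correct side of both the base line and of $\ell(z,\cdot)$, and $z$ to the left of both supporting lines) follow from the relative positions in the statement, namely from $c$ being to the left of $\ell(a,b)$ and from none of the seven points lying to the right of $\ell(a,a')$.

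The crux is the third relation, $\vec{b'b}$ points to $aa'$, which is \emph{reversed}: the segment $bb'$ points toward $aa'$ from the $b$-side, so neither Lemma~\ref{lem:monotone-thing-1} nor Lemma~\ref{lem:monotone-thing-2} applies. Instead I will use Lemma~\ref{lem:monotone-thing-3} to supply the missing inequality $\|a-z\|-\|b-z\| < \|a-b'\|-\|b-b'\|$, equivalently the swap inequality $\|a-z\|+\|b-b'\| < \|a-b'\|+\|b-z\|$. I plan to instantiate Lemma~\ref{lem:monotone-thing-3} on the segment determined by $z$ and $a$, so that its circle is $C_{za}$ and its midpoint is the center of $za$; I take its auxiliary circle point to be the intersection of the ray from that midpoint toward $b'$ with $C_{za}$, and I place $b$ and $b'$ in the roles of $q$ and $q'$. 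The hypotheses of Lemma~\ref{lem:monotone-thing-3} then read: $b$ lies on the chord from the auxiliary point to $a$, and $b'$ lies on that ray beyond the auxiliary point; both are exactly what $\vec{b'b}$ points to $aa'$ (that is, $b\in\operatorname{int}(\Delta b'aa'\cap D_{aa'})$) together with the position of $z$ guarantees.

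Adding the three inequalities, the terms $\pm\|a-z\|$, $\pm\|b-z\|$, $\pm\|c-z\|$ telescope to $0$ and I am left with inequality~\eqref{eq5}, contradicting that $\{(a,a'),(b,b'),(c,c')\}$ is a max-sum matching. The hard part will be this last relation: carefully checking, from the reversed pointing hypothesis and the location of $z$ on the three circles, that the configuration really satisfies the precise chord-and-ray and left-of-line hypotheses of Lemma~\ref{lem:monotone-thing-3}. This is the one place where the elementary arguments behind Lemmas~\ref{lem:monotone-thing-1} and~\ref{lem:monotone-thing-2} break down and the hyperbola-based Lemma~\ref{lem:monotone-thing-3} is genuinely required.
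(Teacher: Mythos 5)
Your first two inequalities, obtained from $\vec{bb'}$ pointing to $cc'$ and $\vec{cc'}$ pointing to $aa'$ via Lemma~\ref{lem:monotone-thing-1}, are exactly the ones the paper uses. The gap is in your third step. You want $\|a-z\|-\|b-z\|<\|a-b'\|-\|b-b'\|$ from Lemma~\ref{lem:monotone-thing-3} applied to the circle $C_{za}$, taking the lemma's auxiliary circle point to be $\zeta$, the intersection of the ray from the midpoint of $za$ through $b'$ with $C_{za}$, and casting $b,b'$ as $q,q'$. But Lemma~\ref{lem:monotone-thing-3} requires $q$ to lie \emph{on the segment} $\zeta p'=\zeta a$, i.e., $b$ must sit exactly on the chord of $C_{za}$ from $\zeta$ to $a$, where $\zeta$ is already pinned down by $b'$. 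That is a codimension-one condition. The hypothesis that $\vec{b'b}$ points to $aa'$ only places $b$ in a two-dimensional region (the interior of $\Delta b'aa'\cap D_{aa'}$) and in no way forces it onto that chord, so the hypotheses of Lemma~\ref{lem:monotone-thing-3} cannot be verified and your third inequality is not established. The step you yourself flag as ``the hard part'' is precisely where the argument breaks.

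The paper's proof indicates why a single telescoping third inequality is not the right move. It splits on whether $b$ lies in the triangle $\Delta aa'z$. If it does, the segments $bz$ and $ab'$ cross (using that $\tau(b,b')$ meets $\tau(o,z)$, where $o$ is the center of $D_{aa'}$), and the third inequality follows from the plain triangle inequality on the convex quadrilateral; the three inequalities are then summed as you intend, with no hyperbola needed. If $b\notin\Delta aa'z$, the paper abandons the summation scheme entirely: assuming the matching were max-sum, the two-pair restriction $\{(a,a'),(b,b')\}$ would be max-sum; the segment $bb'$ is extended along its line to $ww'$ with one endpoint on $\tau(o,z)$ beyond $z$ and the other on $za'$, Lemma~\ref{lem:extending} shows $\{(a,a'),(w,w')\}$ would then be max-sum, and Lemma~\ref{lem:monotone-thing-3} applied \emph{on the circle $C_{aa'}$} (with $p=a$, $p'=a'$, and the endpoints of $ww'$ in the roles of $q$ and $q'$) contradicts this directly. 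So Lemma~\ref{lem:monotone-thing-3} is indeed the tool for the reversed pointing relation, but it must be applied to $C_{aa'}$ and to a suitably extended segment whose endpoints genuinely lie on $za'$ and on $\tau(o,z)$ --- not to $C_{za}$ and the original pair $(b,b')$. Your instantiation cannot be repaired without reproducing this case analysis.
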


\begin{figure}[t]
	\centering
	\begin{subfigure}[t]{0.5\textwidth}
		\centering
		\includegraphics[scale=0.8,page=10]{img.pdf}
		\caption{~}
		\label{fig:lem:common-point-3}
	\end{subfigure}\hfill
	\begin{subfigure}[t]{0.5\textwidth}
		\centering
		\includegraphics[scale=0.8,page=11]{img.pdf}
		\caption{~}
		\label{fig:lem:common-point-4}
	\end{subfigure}	
	\caption{\small{
			(a,b) Proof of Lemma~\ref{lem:common-point-3}.
	}}
	\label{fig:lem:monotone-thing-2}
\end{figure}

\begin{proof}
We can apply Lemma~\ref{lem:monotone-thing-1} for $b$, $b'$, $c$, $c'$, and $z$ (where $b$ and $c$ play the role of $p$ and $q$, respectively); and for $c$, $c'$, $a$, $a'$, and $z$ (where $c$ and $a$ play the role of $p$ and $q$, respectively). Hence, we obtain the following two inequalities:
\begin{align*}
	\|b-z\|-\|c-z\| & ~<~ \|b-c'\|-\|c-c'\|, & (\text{Lemma~\ref{lem:monotone-thing-1}})\\
	\|c-z\|-\|a-z\| & ~<~ \|c-a'\|-\|a-a'\|. & (\text{Lemma~\ref{lem:monotone-thing-1}})
\end{align*}
Let $o$ denote the midpoint of segment $aa'$, also the center of $D_{aa'}$. Given that $z-c$ and $z-c'$ are orthogonal, and $\vec{cc'}$ points to $aa'$, we have that $c$ is to the left of line $\ell(o,z)$. Note that if $c$ is to the right of, or in, $\ell(o,z)$, then $c'$ would not belong to the interior of $D_{aa'}$. Similarly, since $z-b$ and $z-b'$ are orthogonal, and $\vec{b'b}$ points to $aa'$, we have that $b'$ is to the right of line $\ell(o,z)$. Then, since $\vec{bb'}$ points to $cc'$, we have that rays $\tau(b,b')$ and $\tau(o,z)$ must intersect.

If point $b$ belongs to triangle $\Delta aa'z$ (see Figure~\ref{fig:lem:common-point-4}), then the facts that $z-b$ and $z-b'$ are orthogonal, and $\tau(b,b')\cap \tau(o,z)\neq \emptyset$ imply that segments $bz$ and $ab'$ have a common point. Hence, $\|a-z\|+\|b-b'\|<\|b-z\|+\|a-b'\|$ by the triangle inequality. That is,
\[
	\|a-z\|-\|b-z\| ~<~ \|a-b'\|-\|b-b'\|.
\]
Adding the three inequalities above, we obtain again inequality~\eqref{eq5}, implying the result.

In the contrary case, $b$ does not belong to $\Delta aa'z$, (see Figure~\ref{fig:lem:common-point-3}), let us assume by contradiction that the matching $\{(a,a'), (b,b'),(c,c')\}$ is max-sum. Then, the matching $\{(a,a'),(b,b')\}$ is also max-sum. Let $w$ and $w'$ be the intersection points of $\ell(b,b')$ with $\tau(o,z)$ and $za'$, respectively. Note that $bb'\subset ww'$, and by Lemma~\ref{lem:extending}, $\{(a,a'),(w,w')\}$ is a max-sum matching of $\{a,a',w,w'\}$. But we can apply Lemma~\ref{lem:monotone-thing-3}, where $a$ and $w$ play the role of $p$ and $q$, respectively, to have that $\{(a,a'),(w,w')\}$ is not max-sum, which implies the result.
\end{proof}

We are now ready to give the proofs to the missing cases, those from {\bf (H)} to {\bf (J)}.

\begin{lemma}\label{lem:hard-cases}
If the segments of a max-sum matching of six points fall in one of the cases from {\bf (H)} to {\bf (J)}, then the three disks of the matching have a common intersection.
\end{lemma}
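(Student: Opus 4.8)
The plan is to argue by contradiction, exactly along the lines sketched before the statement. Fix a max-sum matching $\M=\{(a,a'),(b,b'),(c,c')\}$ of six points whose three segments fall in case {\bf (H)}, {\bf (I)}, or {\bf (J)}, and suppose that $D_{aa'}\cap D_{bb'}\cap D_{cc'}=\emptyset$. By Proposition~\ref{prop1} the three disks intersect pairwise, so the three lenses $D_{aa'}\cap D_{bb'}$, $D_{bb'}\cap D_{cc'}$, and $D_{aa'}\cap D_{cc'}$ are nonempty; moreover any two of them meet exactly in $D_{aa'}\cap D_{bb'}\cap D_{cc'}$, hence they are pairwise disjoint. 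The idea is to enlarge one disk, say $D_{aa'}$, by sliding the vertex $a'$ along the ray $\tau(a,a')$ to a farther point $a''$. Since $a'$ then lies between $a$ and $a''$, we have $D_{aa'}\subseteq D_{aa''}$, the order type of the three segments is unchanged (so we remain in the same case), and by Lemma~\ref{lem:extending} the matching $\{(a,a''),(b,b'),(c,c')\}$ is again max-sum, now of the point set $(\{a,b,c,a',b',c'\}\setminus\{a'\})\cup\{a''\}$.

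The key geometric step is to choose the amount of extension so that the three enlarged disks acquire a single common point $z$ lying on all three bounding circles. Let $z$ be the intersection point of $C_{bb'}$ and $C_{cc'}$ bounding the lens $D_{bb'}\cap D_{cc'}$ on the side facing $aa'$; because the original disks have no common point, $z\notin D_{aa'}$, i.e.\ the angle $\angle a z a'$ is acute. As $a''$ travels outward along $\tau(a,a')$ the angle $\angle a z a''$ increases and reaches exactly $\pi/2$ for a unique choice of $a''$; for that $a''$ the point $z$ lies on $C_{aa''}$ as well, and we will argue that $z$ is then the unique point of $D_{aa''}\cap D_{bb'}\cap D_{cc'}$. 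By construction $a-z\perp a''-z$, $b-z\perp b'-z$, and $c-z\perp c'-z$, so $z$ satisfies the orthogonality hypothesis common to Lemmas~\ref{lem:common-point-1},~\ref{lem:common-point-2}, and~\ref{lem:common-point-3}.

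It then remains to read off, in each of the three cases, the remaining hypotheses of the appropriate non-max-sum lemma. Concretely, I expect case {\bf (H)} (the cyclic configuration in which each oriented segment points to the next) to match Lemma~\ref{lem:common-point-1}; case {\bf (I)} (two segments forming a pointing chain together with one crossing pair) to match Lemma~\ref{lem:common-point-2}; and case {\bf (J)} to match Lemma~\ref{lem:common-point-3}. For each case one checks the side conditions, namely which vertex lies to the left of which supporting line and that $z$ lies to the left of each line $\ell(u,u')$; these follow from the orientations defining the case together with the placement of $z$ on the $aa'$-facing side of the lens. The selected lemma then yields
\[
\|a-a''\|+\|b-b'\|+\|c-c'\| ~<~ \|a-b'\|+\|b-c'\|+\|c-a'\|,
\]
contradicting that $\{(a,a''),(b,b'),(c,c')\}$ is max-sum. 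Hence no configuration in {\bf (H)}--{\bf (J)} without a common point exists, and together with Lemma~\ref{lem:easy-cases} every one of the ten cases admits a common intersection.

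The main obstacle will be the second paragraph: making precise that a single outward extension of one well-chosen vertex simultaneously (i) produces a singleton common intersection rather than a two-dimensional one or no intersection at all, and (ii) places that singleton on all three circles, while (iii) preserving the order type so that the enlarged triple still falls in cases {\bf (H)}--{\bf (J)}. In particular one must verify that the first contact of the growing disk with the opposite lens occurs at the lens corner $z\in C_{bb'}\cap C_{cc'}$ and not along a smooth arc, and that the correct corner and the correct vertex to move are dictated by the orientation pattern of the case at hand; the bulk of the work is the per-case bookkeeping needed to match each configuration to exactly one of Lemmas~\ref{lem:common-point-1}--\ref{lem:common-point-3} and to confirm their side conditions.
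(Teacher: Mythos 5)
Your strategy is the paper's own: grow one disk until the three disks meet in a singleton $z$ at a corner of the opposite lens, keep max-sumness via Lemma~\ref{lem:extending}, and derive a contradiction from Lemmas~\ref{lem:common-point-1}--\ref{lem:common-point-3}. However, two steps where the paper does real work are missing. First, you decide in advance to extend $aa'$ and to take $z$ as the corner of the lens $D_{bb'}\cap D_{cc'}$ facing $aa'$. Nothing rules out that this corner coincides with one of the six matched points (an endpoint of $bb'$ or $cc'$ can lie on the other circle), in which case $z$ lies on one of the lines $\ell(u,u')$ and the vector $u-z$ vanishes, so the hypotheses of Lemmas~\ref{lem:common-point-1}--\ref{lem:common-point-3} degenerate and cannot be invoked. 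The paper first proves that at least one of the three lens corners is \emph{not} an endpoint of any segment (no two corners can be endpoints of the same segment, and all three corners being endpoints would violate Lemma~\ref{lem:point-in-disk}), and only then chooses which segment to extend, namely the one whose disk misses that particular corner. You need this selection argument; you cannot fix the segment to extend a priori.

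Second, you assert that the extension leaves the order type unchanged, and then in your closing paragraph correctly concede that preserving membership in cases {\bf (H)}--{\bf (J)} is an unresolved obstacle. It is genuinely possible for the case to change under extension: the paper notes that {\bf (H)} can turn into {\bf (I)}, {\bf (I)} into {\bf (C)}, and {\bf (J)} into {\bf (B)} or {\bf (D)}, and it kills the escapes out of \{{\bf (H)},{\bf (I)},{\bf (J)}\} by observing that those transitions force a non-singleton common intersection, which is incompatible with the singleton intersection at $z$ that the construction produces. Without this argument you are not entitled to apply the three common-point lemmas to the new configuration. (Your mechanism for reaching $z$ --- sliding $a'$ to $a''$ until $\angle a z a''=\pi/2$ --- is in the spirit of the paper, and the ``first contact at the corner rather than along an arc'' concern you raise is treated at essentially the same level of detail in the paper itself, so I would not count that against you; the two gaps above are the substantive ones.)
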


\begin{proof}
Suppose	by contradiction that the three disks, denoted $D_1$, $D_2$, and $D_3$, intersect pairwise, but without a common intersection (see Figure~\ref{fig:lem-hard-cases}). Let $u_{1,2}$, $u_{2,3}$, and $u_{3,1}$ be the vertices of the pairwise disjoint lenses $D_1\cap D_2$, $D_2\cap D_3$, and $D_3\cap D_1$, respectively, located in the triangle with vertices the centers of $D_1$, $D_2$, and $D_3$, respectively. 

\begin{figure}[t]
	\centering
	\includegraphics[scale=1.3,page=17]{img.pdf}
	\caption{\small{Proof of Lemma~\ref{lem:hard-cases}.}}	
	\label{fig:lem-hard-cases}
\end{figure}

The idea is to use Lemma~\ref{lem:extending}, in combination with Lemmas~\ref{lem:common-point-1},~\ref{lem:common-point-2}, and~\ref{lem:common-point-3}, such that the point $z$ of these lemmas is among $u_{1,2}$, $u_{2,3}$, and $u_{3,1}$. To this end, we need to guarantee that point $z$ is not an extreme point of some segment of the matching. This is done in the next paragraph.

Note that two vertices among $u_{1,2}$, $u_{2,3}$, and $u_{3,1}$ cannot be the extreme points of a same segment of the matching. Furthermore, if each of the three vertices is an extreme point of some segment of the matching, then at least one pair of disjoint segments violates Lemma~\ref{lem:point-in-disk}. That is, the extreme point of one segment, in the interior of the convex hull of the four involved points, is not in the interior of the disk corresponding to the other segment. Hence, we can assume that at least one vertex among $u_{1,2}$, $u_{2,3}$, and $u_{3,1}$ is not an extreme point of a segment of the matching: say vertex $u_{1,2}$. This implies that we can extend the segment of disk $D_3$ by moving one of its extreme points such that the new three matching disks have a singleton common intersection at $u_{1,2}$. Let $z=u_{1,2}$, where $z$ is distinct from all the new six points. 

Let the new six points be denoted as $a$, $b$, $c$, $a'$, $b'$, and $c'$, in such a way that the new segments are precisely $aa'$, $bb'$, and $cc'$, and for each $u\in\{a,b,c\}$ point $z$ is to the left of line $\ell(u,u')$. By Lemma~\ref{lem:extending}, $\{(a,a'),(b,b'),(c,c')\}$ is a max-sum matching of $\{a,b,c,a',b',c'\}$.

It is important to note the following: If the original segments are in Case~{\bf (H)}, then by extending one segment we can stay in Case~{\bf (H)}, or go to Case~{\bf (I)}. If the original segments are in Case~{\bf (I)}, then by extending one segment we can stay in Case~{\bf (I)}, or go to Case~{\bf (C)} with a non-singleton common intersection of the three disks. Otherwise, if the original segments are in Case~{\bf (J)}, then by extending one segment we can stay in Case~{\bf (J)}, or go to Case~{\bf (B)} or~{\bf (D)} with a non-singleton common intersection of the three disks. Hence, since the common intersection of the new three disks $D_{aa'}$, $D_{bb'}$, and $D_{cc'}$ is singleton, we can ensure that the new segments $aa'$, $bb'$, and $cc'$ are again in a case from {\bf (H)} to {\bf (J)}, and the proof continues as follows.

If $aa'$, $bb'$, and $cc'$ fall in Case~{\bf (H)}, then by Lemma~\ref{lem:common-point-1} $\{(a,a'),(b,b'),(c,c')\}$ is not max-sum. If they are in in Case~{\bf (I)}, then by Lemma~\ref{lem:common-point-2} $\{(a,a'),(b,b'),(c,c')\}$ is not max-sum. Otherwise, if they are in in Case~{\bf (J)}, then by Lemma~\ref{lem:common-point-3} we have that $\{(a,a'),(b,b'),(c,c')\}$ is not max-sum. There exists a contradiction in each of the cases, thus the original three disks must have a common intersection. The lemma thus follows.
\end{proof}

Combining Lemma~\ref{lem:easy-cases} and Lemma~\ref{lem:hard-cases}, with Helly's theorem, we state now the main results of this paper:

\begin{theorem}
Let $P$ be a set of $2n$ (uncolored) points in the plane. Any max-sum matching $\M$ of $P$ is such that all disks of $D_{\M}$ have a common intersection.
\end{theorem}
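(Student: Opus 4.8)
The plan is to reduce the general statement to the case $n=3$ by Helly's theorem, and then invoke the case analysis already carried out in Lemmas~\ref{lem:easy-cases} and~\ref{lem:hard-cases}. Since the members of $D_{\M}$ are convex sets in the plane, Helly's theorem guarantees that the whole family has a common point as soon as every three of its members do. Hence it suffices to show that, for any three pairs of $\M$, the corresponding three disks share a common point.

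The key preliminary observation I would establish is that the restriction of a global max-sum matching to any three of its pairs is itself a max-sum matching of the six points involved. This follows from a standard exchange argument: if some rematching of those six points had larger total length, substituting it into $\M$ (while leaving the remaining $n-3$ pairs untouched) would yield a matching of $P$ of cost strictly greater than $\cost(\M)$, contradicting the maximality of $\M$. With this in hand, any three chosen pairs satisfy precisely the hypotheses under which the lemmas of this section were proved, so those lemmas become applicable.

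Next I would appeal to the structural facts recorded just before Lemma~\ref{lem:point-in-disk}: the three disks intersect pairwise (Proposition~\ref{prop1}, whose exchange proof applies verbatim to uncolored points), every two of the three segments either cross or one oriented segment points to the other (Lemma~\ref{lem:point-in-disk}), and no four endpoints of two segments are in convex position. From these constraints, the relative position of the three segments must be one of the ten order types {\bf (A)}--{\bf (J)} depicted in Figure~\ref{fig:cases}. For the seven configurations {\bf (A)}--{\bf (G)}, Lemma~\ref{lem:easy-cases} exhibits a common point of the three disks directly; for the remaining configurations {\bf (H)}--{\bf (J)}, Lemma~\ref{lem:hard-cases} proves the same by contradiction. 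Either way the three disks have a common intersection, which completes the $n=3$ case and, via Helly, the theorem.

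I expect the delicate point to be not the top-level argument---a clean Helly reduction---but the implicit verification that the ten cases {\bf (A)}--{\bf (J)} are genuinely exhaustive: one must argue that, once the ``cross or point-to'' dichotomy and the no-four-points-in-convex-position constraint are imposed on three segments, no other combinatorial arrangement can arise, and that extending a segment keeps the configuration within the hard cases as claimed in the proof of Lemma~\ref{lem:hard-cases}. The exchange argument for the sub-matching, though routine, is equally essential and should be stated explicitly, since every lemma in this section silently assumes that the six points under consideration already form a max-sum matching.
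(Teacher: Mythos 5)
Your proposal is correct and follows essentially the same route as the paper: a Helly reduction to $n=3$, the observation that any three pairs of a max-sum matching form a max-sum matching of their six points, and then the case analysis of Lemmas~\ref{lem:easy-cases} and~\ref{lem:hard-cases} over the ten configurations. Your explicit statement of the exchange argument for the sub-matching is a point the paper leaves implicit, but it does not change the argument.
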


\begin{corollary}
Let $P$ be a set of $2n$ (uncolored) points in the plane, and let $\{(a_i,b_i),i=1,\dots,n\}$ be a max-sum matching of $P$. Then, there exists a point $o$ of the plane such that for all $i\in\{1,\ldots,n\}$ we have:
\[
	\|a_i-o\|+\|b_i-o\|~\le~ \sqrt{2}\cdot  \|a_i-b_i\|.
\]
\end{corollary}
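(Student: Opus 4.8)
The plan is to reduce the claim to the case $n=3$ and then dispatch a finite case analysis on the relative position of three matched segments. Since each disk $D_{a_ib_i}$ is convex and the ambient space is the plane, Helly's theorem tells us that the whole family $D_\M$ has a common point as soon as every three of its members do. Hence it suffices to prove that any three disks of a max-sum matching share a common point; equivalently, we may assume $n=3$ and write the matching as $\{(a,a'),(b,b'),(c,c')\}$.

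Before the case analysis, I would record two structural facts that hold for every max-sum matching. First, by Proposition~\ref{prop1} the three disks intersect pairwise. Second, for any two matched segments the four endpoints cannot be in convex position\,---\,otherwise swapping the matched pairs along the ``diagonals'' would strictly increase the total length, contradicting max-sumness\,---\,so the two segments either cross or one of them, suitably oriented, points to the other in the sense of Lemma~\ref{lem:point-in-disk}. Combining these constraints, the mutual positions of three segments collapse to the ten order types depicted in Figure~\ref{fig:cases}, labelled {\bf (A)} to {\bf (J)}.

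With the classification in hand, the argument splits into the two regimes already isolated as lemmas. For the ``easy'' configurations {\bf (A)} to {\bf (G)}, I would invoke Lemma~\ref{lem:easy-cases}, which exhibits an explicit common point\,---\,either a foot of an altitude of the triangle bounded by the three segments (via Thales' theorem) or a shared endpoint forced into two disks by Lemma~\ref{lem:point-in-disk}. For the ``hard'' configurations {\bf (H)} to {\bf (J)}, I would invoke Lemma~\ref{lem:hard-cases}. Since every case yields a common point, the three disks intersect, and Helly's theorem upgrades this to a common intersection of all $n$ disks, proving the theorem.

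The step I expect to be delicate is the hard-case regime, and it is the reason the whole scheme is structured as it is. There the disks are only known to meet pairwise, so one argues by contradiction: assuming no common point, the three pairwise lenses are disjoint, and at least one of their inner vertices $u_{i,j}$ is \emph{not} an endpoint of a matched segment (this requires the pigeonhole-style observation, again leaning on Lemma~\ref{lem:point-in-disk}, that the three vertices cannot all be endpoints). One then extends the third segment until its disk passes through that vertex, so the three disks meet in a single point; Lemma~\ref{lem:extending} certifies that the enlarged segments still form a max-sum matching, yet the extended configuration remains in case {\bf (H)}, {\bf (I)}, or {\bf (J)}, where Lemmas~\ref{lem:common-point-1}, \ref{lem:common-point-2}, and~\ref{lem:common-point-3}\,---\,built on the monotone inequalities of Lemmas~\ref{lem:monotone-thing-1} through~\ref{lem:monotone-thing-3} and Proposition~\ref{prop:hyperbola}\,---\,show the matching is \emph{not} max-sum, the desired contradiction.
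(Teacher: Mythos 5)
Your outline reproduces, step for step, the paper's proof of the main Theorem: the Helly reduction to $n=3$, the two structural constraints from Proposition~\ref{prop1} and Lemma~\ref{lem:point-in-disk}, the ten order types of Figure~\ref{fig:cases}, Lemma~\ref{lem:easy-cases} for cases {\bf (A)} to {\bf (G)}, and the extend-then-contradict machinery of Lemma~\ref{lem:hard-cases} for cases {\bf (H)} to {\bf (J)}. That part is faithful to the paper and correct as a proof that all disks of $D_{\M}$ have a common point.

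However, the statement you were asked to prove is the Corollary, not the Theorem, and your argument stops one step short: you never justify why a point $o$ lying in every disk $D_{a_ib_i}$ satisfies $\|a_i-o\|+\|b_i-o\|\le\sqrt{2}\cdot\|a_i-b_i\|$. You assert at the outset that it ``suffices to prove that any three disks share a common point,'' but that reduction is precisely the missing bridge, and it is the only content of the Corollary beyond the Theorem. It is easy to supply: if $o\in D_{a_ib_i}$, then the angle at $o$ subtended by the diameter $a_ib_i$ is at least $\pi/2$, so $\|a_i-o\|^2+\|b_i-o\|^2\le\|a_i-b_i\|^2$, and by the Cauchy-Schwarz inequality
\[
(\|a_i-o\|+\|b_i-o\|)^2~\le~2\left(\|a_i-o\|^2+\|b_i-o\|^2\right)~\le~2\,\|a_i-b_i\|^2,
\]
which is the claimed bound; equivalently, each disk $D_{pq}$ is contained in the ellipse with foci $p$ and $q$ and sum of focal radii $\sqrt{2}\,\|p-q\|$. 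The paper also treats this implication as immediate, but a self-contained proof of the Corollary must state it.
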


\blfootnote{
	\begin{minipage}[l]{0.3\textwidth}
		\includegraphics[trim=10cm 6cm 10cm 5cm,clip,scale=0.15]{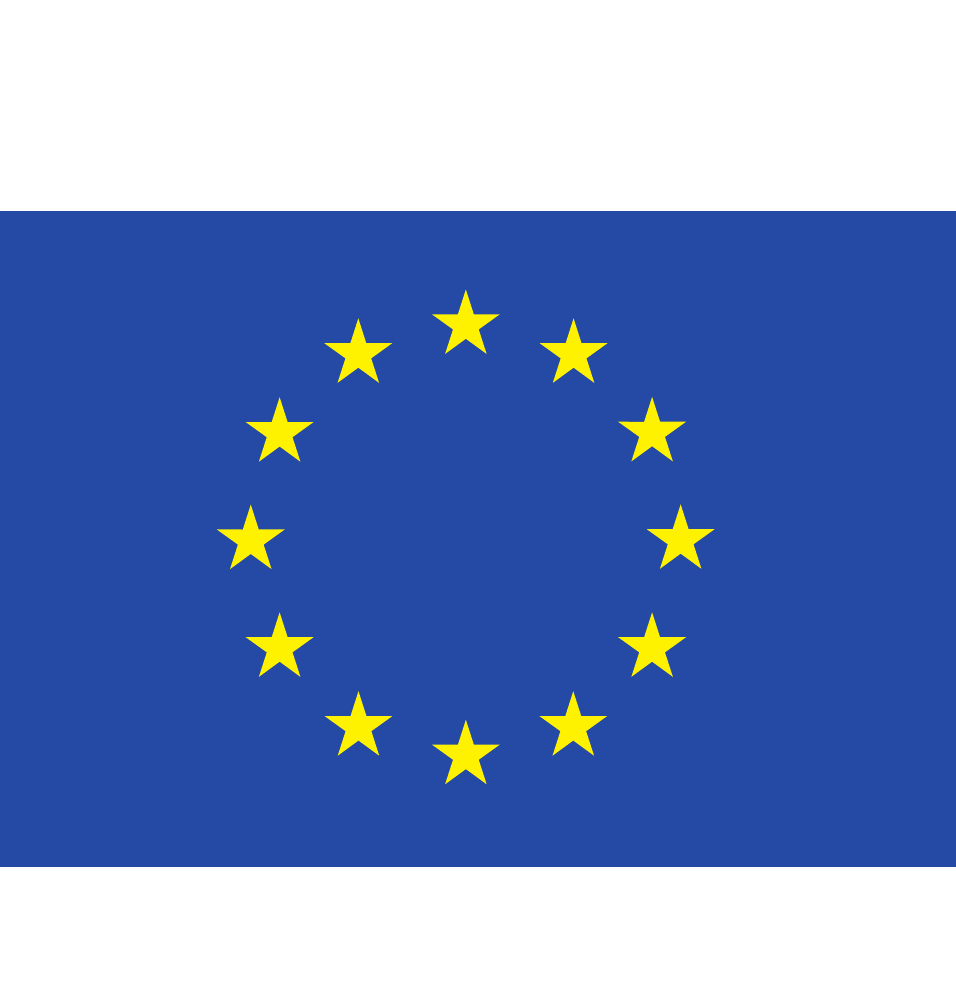}
	\end{minipage}
	\hspace{-2cm}
	\begin{minipage}[l][1cm]{0.75\textwidth}	
       	This work has received funding from the European Union's Horizon 2020
       	research and innovation programme under the Marie Sk\l{}odowska-Curie
       	grant agreement No 734922.
     \end{minipage}}

%
%
%


\bibliography{refs}

\end{document}